\documentclass[preprint,12pt]{elsarticle}

\usepackage{amsmath,amssymb,amsfonts}
\usepackage{amsthm}
\usepackage{graphicx}
\usepackage{booktabs}
\usepackage{url}
\usepackage[colorlinks=true,allcolors=blue]{hyperref}
\usepackage{bm}
\usepackage{enumitem}

\newtheorem{theorem}{Theorem}

\newtheorem{proposition}{Proposition}

\newtheorem{definition}{Definition}
\newtheorem{remark}{Remark}

\newtheorem{observation}[theorem]{Empirical observation}

\newcommand{\Z}{\mathbb{Z}}
\newcommand{\C}{\mathbb{C}}
\newcommand{\R}{\mathbb{R}}
\newcommand{\Cay}{\mathrm{Cay}}
\newcommand{\Aut}{\mathrm{Aut}}
\newcommand{\dG}{d_{G}}
\newcommand{\dGam}{d_{\Gamma}}

\newcommand{\F}{\mathbb{F}}
\newcommand{\Lift}{\mathcal{L}}
\newcommand{\Restr}{\mathcal{R}}
\newcommand{\Id}{\mathrm{Id}}
\DeclareMathOperator{\diam}{diam}

\journal{arXiv Preprint}

\begin{document}
	\begin{frontmatter}
	
	\title{Tight Wavelet Frames on Graphs via\\ Isometric Group Embedding}
	
	\author[inst1]{Rigobert~Fokam~Souop\corref{cor1}}
	\ead{fokamrigobert@gmail.com}
	\cortext[cor1]{Corresponding author}
	
	\author[inst1]{Laurent~Bitjoka}
	
	\affiliation[inst1]{organization={Laboratory of Energy, Signal, Imaging and Automation (LESIA), Department of Electrical Engineering, Energetics and Automation},
		addressline={University of Ngaound\'er\'e}, 
		city={Ngaound\'er\'e},
		country={Cameroon}}
	
	\begin{abstract}
		Spectral graph wavelets apply a kernel to the graph Laplacian spectrum. On an
		irregular graph their analyzing functions inherit a non-canonical eigenbasis,
		they do not form a tight frame, and reconstruction requires inverting a frame
		operator. We take a different route, built on an exact substrate. Given an
		isometric embedding of a connected graph into a Cayley graph of a finite
		abelian group---a host on which classical Fourier analysis applies exactly---we
		construct wavelets on the host and restrict them to the graph. Two
		constructions arise and we keep them separate. \emph{Dilation wavelets} use a
		group automorphism as a dilation, reproducing the classical translate--dilate
		template but existing only on hosts with composite cyclic factors.
		\emph{Spectral band-pass wavelets} use a normalized filter bank in the dual
		frequency magnitude; they exist on every host, form a Parseval (tight) frame,
		reconstruct any graph signal exactly via restriction, are
		translation-covariant, and localize jointly in vertex and frequency. We prove
		the tight-frame identity and exact reconstruction, give a multiresolution
		decomposition, and show the full transform costs $O(JN\log N)$ via the host
		fast Fourier transform. For a proper embedding we show the canonical way to
		complete a signal onto the host remainder is the discrete \emph{harmonic
			extension}, which uniquely minimizes the host Dirichlet energy and places the
		zero-padding and symmetric-extension heuristics as approximations of it. On
		benchmark hosts reconstruction reaches machine precision ($\sim\!10^{-15}$) and
		band-pass atoms concentrate $89$--$99\%$ of their energy within graph-distance
		two of their center.
	\end{abstract}
	
	\begin{keyword}
	Graph signal processing \sep graph Fourier transform \sep Cayley graphs \sep isometric embedding \sep abelian groups \sep shift operator \sep convolution \sep harmonic analysis \sep group representation
	\end{keyword}

	\end{frontmatter}

	\section{Introduction}
	\label{sec:intro}
	
	The extension of wavelet analysis from regular domains to graphs is by now
	a mature subject. The spectral graph wavelet transform of Hammond,
	Vandergheynst, and Gribonval~\cite{Hammond2011,HammondVandergheynstGribonval2019}
	defines wavelets through a kernel $g$ evaluated on the eigenvalues of the
	graph Laplacian, and remains the reference construction against which we
	measure ours. Earlier and parallel approaches reach related goals by other
	means: diffusion wavelets built from powers of a diffusion
	operator~\cite{CoifmanMaggioni2006}, spatial graph wavelets for traffic
	analysis~\cite{CrovellaKolaczyk2003}, and critically sampled two-channel
	filter banks on bipartite graphs~\cite{NarangOrtega2012}. A complementary
	line uses wavelets as a tool for graph analysis rather than signal
	reconstruction, for instance in multiscale community
	mining~\cite{TremblayBorgnat2014}.
	
	A substantial literature pursues \emph{tight} frames directly on the graph,
	typically by shaping the spectral kernel rather than changing the underlying
	eigenbasis~\cite{LeonardiVanDeVille2013,ShumanWiesmeyrHolighausVandergheynst2015,BehjatRichterVanDeVilleSornmo2016}.
	Subsequent designs trade exact tightness for polynomial (hence localized and
	fast) filters~\cite{TayTanakaSakiyama2017}, develop sparse representations
	and Parseval frames adapted to particular graph
	families~\cite{Dong2017,NakahiraMiyamoto2018}, or analyse such frames
	statistically for denoising~\cite{GobelBlanchardVonLuxburg2018}. Related
	frame-theoretic constructions target Gabor-type atoms and ranked-data domains
	with intrinsic group
	structure~\cite{GhandehariGuillotHollingsworth2021,ChenDeJongHalversonShuman2021},
	while applications to neuroimaging~\cite{BehjatLeonardiSornmoVanDeVille2015}
	and learned constructions~\cite{RustamovGuibas2013} indicate the breadth of
	the program. What these otherwise diverse constructions share is a single
	foundation, common also to the broader theory of signal processing on
	graphs~\cite{Shuman2013,Ortega2018}: a shift operator---the Laplacian, or
	the adjacency-based variant of discrete signal processing on
	graphs~\cite{SandryhailaMoura2013}---is diagonalized, and its eigenvectors
	are taken as Fourier modes. The structural cost of that choice, and the
	alternative we propose, are the subject of this paper.
	
	That foundation carries a structural cost. The eigenvectors of a generic
	graph operator are not canonical---an eigenvalue of multiplicity $m$ fixes
	only an $m$-dimensional eigenspace, not a basis---so the resulting transform
	is defined only up to rotations within degenerate eigenspaces. There is no
	shift that acts as a genuine translation, the analyzing frame is in general
	not tight, and recovering a signal from its wavelet coefficients requires
	solving a linear system with the frame operator \cite{Hammond2011}.
	
	This paper develops graph wavelets on a different substrate. Suppose the graph
	$G$ is embedded \emph{isometrically} into a Cayley graph of a finite abelian
	group $\Gamma$; such an embedding always exists, and can be made compact, by
	the construction recalled in Section~\ref{sec:prelim} (with full proofs in the
	companion work \cite{FokamThesis2026}). On the host $\Gamma$ the characters
	furnish a \emph{canonical} orthonormal Fourier basis, translation is the group
	action, and convolution is exact \cite{ShiMoura2019}. We build wavelets there
	and pull them back
	to $G$. The payoff is a wavelet system whose mathematical properties are those
	of classical wavelet analysis \cite{Mallat2009,Daubechies1992,Meyer1992}
	rather than approximations of it.
	
	Our contributions are the following.
	\begin{enumerate}[label=(\roman*)]
		\item We separate two genuinely different constructions that the literature on
		``group'' graph wavelets tends to conflate
		(Section~\ref{sec:tightframe}). \emph{Dilation wavelets}
		require a dilation automorphism and exist only on hosts with composite
		cyclic factors; on a binary host $\Z_2^k$ no nontrivial scalar dilation
		exists, so this construction is genuinely restricted. \emph{Spectral
			band-pass wavelets} exist on every host, and are the construction we
		develop.
		\item We prove that the normalized spectral filter bank yields a Parseval
		tight frame and that every graph signal is reconstructed exactly, by
		restriction, from its wavelet coefficients (Theorem~\ref{thm:tightframe}).
		\item We establish translation covariance and joint localization
		(Section~\ref{sec:localization}), give a multiresolution decomposition
		(Section~\ref{sec:mra}), and an $O(JN\log N)$ fast transform
		(Section~\ref{sec:complexity}).
		\item For a proper embedding we identify the canonical completion of a graph
		signal onto the host complement: the discrete harmonic extension, which
		uniquely minimizes host Dirichlet energy and subsumes the zero-padding and
		symmetric-extension heuristics as approximations
		(Section~\ref{sec:completion}, Theorem~\ref{thm:harmonic}).
		\item We validate the theory numerically (Section~\ref{sec:experiments}):
		machine-precision reconstruction, measured localization, the optimality of
		the harmonic completion, and a comparison
		with the spectral graph wavelet transform that is explicit about where each
		approach is preferable (Section~\ref{sec:discussion}).
	\end{enumerate}
	
	\paragraph{Naming.}
	Following the companion papers~\cite{FokamP1,FokamP2}, we refer to the
	overall framework as \emph{Group Embedding-based Graph Signal Processing}
	(GE-GSP). Its Fourier layer---the group graph Fourier transform of
	Section~\ref{sec:prelim}---is the GE-GFT, and the wavelet transform
	constructed in this paper is the GE-GWT. We use these names throughout, in
	particular in the head-to-head of Section~\ref{sec:discussion} and
	Figure~\ref{fig:headtohead}.
	
	\paragraph{The methodological principle.}
	The construction rests on a single principle from classical harmonic analysis,
	used here as the mathematical foundation of the method rather than as an end in
	itself. The principle is that the characters of a finite abelian group furnish a
	\emph{canonical} orthonormal Fourier basis---fixed, with no eigenspace
	ambiguity---on which translation is exact modulation and convolution is exact,
	the finite-group specialization of Fourier analysis on locally compact abelian
	groups~\cite{Rudin1962,Folland2016}. Transporting an arbitrary graph onto such
	a group by an isometric embedding makes this apparatus available to it, and two
	concrete signal-processing objects result. First, the tight-frame identity
	(Theorem~\ref{thm:tightframe}) is a Parseval relation for a partition of unity
	in the character variable, exact rather than approximate because the basis is
	canonical---so reconstruction needs no frame-operator inversion. Second, the
	optimal host completion (Theorem~\ref{thm:harmonic}) is a discrete Dirichlet
	minimization: among all extensions of a graph signal to the host, the harmonic
	one minimizes Dirichlet energy and is computable through the same group Fourier
	transform. The excursion ratio $\varepsilon$ measures, in a single number, how
	far a given graph is from the classical abelian setting, and thus predicts when
	the method is fast and sharply localized. The value of the framework is
	methodological: it delivers, for any graph carrying abelian structure, the exact
	tightness, canonical basis, and true translation that a spectral construction on
	a non-canonical eigenbasis cannot.

	\section{Preliminaries: isometric group embedding and the group Fourier transform}
	\label{sec:prelim}
	
	We recall only what the wavelet construction needs; the embedding theory,
	including existence, compactness, and the algorithm that produces the host,
	is developed in the companion papers \cite{FokamP1,FokamP2}. Throughout, $G=(V,E)$ is a finite
	connected graph with shortest-path metric $\dG$, and $\Gamma$ is a finite
	abelian group.
	
	\begin{definition}[Cayley graph]
		For a generating set $S\subseteq\Gamma\setminus\{0\}$ with $S=-S$, the Cayley
		graph $\Cay(\Gamma,S)$ has vertex set $\Gamma$ and edges $\{g,g+s\}$ for
		$g\in\Gamma$, $s\in S$. It is vertex-transitive \cite{Godsil2001}, and its
		graph metric is
		translation invariant: $\dGam(x,y)=\dGam(0,y-x)$.
	\end{definition}
	
	\begin{definition}[Isometric embedding and lift]
		\label{def:embedding}
		An \emph{isometric embedding} of $G$ is an injection
		$\phi\colon V\to\Gamma$ into some $\Cay(\Gamma,S)$ with
		$\dGam(\phi(u),\phi(v))=\dG(u,v)$ for all $u,v\in V$. We write $N=|\Gamma|$ and
		call $\varepsilon=|V|/N\in(0,1]$ the \emph{excursion ratio}. The \emph{lift}
		$\Lift\colon\C^{V}\to\C^{\Gamma}$ sends $s$ to $\tilde s$ with
		$\tilde s(\phi(v))=s(v)$ and $\tilde s(g)=0$ for $g\notin\phi(V)$; the
		\emph{restriction} $\Restr\colon\C^\Gamma\to\C^V$ is
		$(\Restr f)(v)=f(\phi(v))$. Then $\Restr\Lift=\Id$.
	\end{definition}
	
	The existence of such an embedding for every connected graph, together with a
	construction minimizing $N$ heuristically, is the main result of the
	companion papers \cite{FokamP1,FokamP2} (with full detail in \cite{FokamThesis2026}). When $G$ is itself an abelian Cayley graph (a cycle,
	torus, circulant, circular ladder), the embedding is onto, $\varepsilon=1$,
	and the theory below reduces to classical Euclidean signal processing.
	
	\begin{definition}[Characters and the group Fourier transform]
		\label{def:gft}
		The dual $\widehat\Gamma$ consists of the characters
		$\chi_k\colon\Gamma\to\C$, $k\in\widehat\Gamma\cong\Gamma$, which are
		orthonormal:
		$\frac1N\sum_{g}\chi_k(g)\overline{\chi_\ell(g)}=\delta_{k\ell}$. The
		\emph{group graph Fourier transform} (GFT) of $s\in\C^V$ is
		\[
		\hat s(k)=\frac{1}{\sqrt N}\sum_{g\in\Gamma}\tilde s(g)\,\overline{\chi_k(g)},
		\qquad
		\tilde s(g)=\frac{1}{\sqrt N}\sum_{k}\hat s(k)\,\chi_k(g).
		\]
		The matrix $F_{k,g}=\chi_k(g)/\sqrt N$ is unitary, so the GFT preserves the
		$\ell^2$ norm of lifted signals (Plancherel) --- the classical harmonic
		analysis of a finite abelian group \cite{Terras1999,Rudin1962}, specialized
		to the present setting.
	\end{definition}
	
	For a product host $\Gamma=\Z_{N_1}\times\cdots\times\Z_{N_d}$ the characters
	are products of roots of unity,
	$\chi_k(g)=\prod_{r}\exp(2\pi i\,k_r g_r/N_r)$, and the GFT is a mixed-radix
	fast Fourier transform. We write $|k|$ for the \emph{frequency magnitude} of
	$\chi_k$, the word length of $k$ on the dual generating set; concretely
	$|k|=\bigl(\sum_r\min(k_r,N_r-k_r)^2\bigr)^{1/2}$, which equals $\dGam(0,\cdot)$
	transported to $\widehat\Gamma$ and orders characters from smooth (small $|k|$)
	to oscillatory (large $|k|$).
	
	\begin{definition}[Translation]
		\label{def:translation}
		For $h\in\Gamma$ the translation $T_h$ acts on lifts by
		$(T_h\tilde s)(g)=\tilde s(g-h)$; equivalently
		$\widehat{T_h\tilde s}(k)=\overline{\chi_k(h)}\,\hat s(k)$. Each $T_h$ is
		unitary, and $T_gT_h=T_{g+h}$. Because $\Gamma$ is abelian its characters are
		its irreducible representations, each one-dimensional, and the multiplicative
		identity $\chi_k(g+h)=\chi_k(g)\chi_k(h)$ is exactly what makes translation
		act as pure modulation $\hat s(k)\mapsto\overline{\chi_k(h)}\,\hat s(k)$ in the
		dual~\cite{Terras1999,Folland2016}. This is what fails for the eigenvector
		``Fourier'' transform of a generic graph, whose shift operator is not a group
		action and whose frequency analysis is correspondingly
		basis-dependent~\cite{SandryhailaMoura2014}.
	\end{definition}
	
	Two facts will be used repeatedly: the GFT is unitary on the lifted subspace
	$\Lift(\C^V)\subseteq\C^\Gamma$ (Plancherel), and translation is exact on the
	host. Both fail for the eigenvector ``Fourier'' transform of a generic graph;
	they hold here because the host is a group and $\phi$ is isometric, so that
	host distance equals graph distance and a character of small magnitude is a
	slowly varying function along the graph.
	
	\section{Spectral band-pass wavelets: a universal tight frame}
	\label{sec:tightframe}
	\label{sec:spectral}
	
	We now give the construction that exists on every host. It mirrors the
	spectral graph wavelet transform \cite{Hammond2011}, but on the canonical
	characters rather than Laplacian eigenvectors, and it is engineered to be
	tight.
	
	\begin{definition}[Wavelet filter bank]
		\label{def:filterbank}
		Fix a low-pass (scaling) kernel $g_0$ and $J$ band-pass kernels
		$g_1,\dots,g_J\colon[0,|k|_{\max}]\to[0,\infty)$. The \emph{normalized filter
			bank} is
		\[
		\hat\psi_j(k)=\frac{g_j(|k|)}{\sqrt{\sum_{i=0}^{J}g_i(|k|)^2}},
		\qquad j=0,1,\dots,J,
		\]
		so that $\sum_{j=0}^{J}|\hat\psi_j(k)|^2=1$ for every $k$. The \emph{wavelet at
			scale $j$, centered at $h\in\Gamma$} is $\psi_{j,h}=T_h\mathcal F^{-1}\hat\psi_j$,
		that is $\widehat{\psi_{j,h}}(k)=\overline{\chi_k(h)}\,\hat\psi_j(k)$.
	\end{definition}
	
	The normalization is the whole point: it makes the bank a partition of the
	frequency axis in the $\ell^2$ sense, which is exactly the tight-frame
	condition.
	
	\begin{theorem}[Parseval tight frame and exact reconstruction]
		\label{thm:tightframe}
		The family $\{\psi_{j,h}:0\le j\le J,\ h\in\Gamma\}$ is a \emph{Parseval
			frame} \cite{Christensen2016} for
		$\C^\Gamma$: for every $\tilde s$,
		\[
		\sum_{j=0}^{J}\sum_{h\in\Gamma}|\langle\tilde s,\psi_{j,h}\rangle|^2
		=\|\tilde s\|_2^2,
		\qquad
		\tilde s=\sum_{j=0}^{J}\sum_{h\in\Gamma}\langle\tilde s,\psi_{j,h}\rangle\,
		\psi_{j,h}.
		\]
		Consequently every graph signal $s$ satisfies
		$s=\Restr\bigl(\sum_{j,h}W_s(j,h)\,\psi_{j,h}\bigr)$ with
		$W_s(j,h)=\langle\tilde s,\psi_{j,h}\rangle$; reconstruction is exact and uses
		no matrix inversion.
	\end{theorem}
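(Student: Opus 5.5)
The plan is to compute everything on the Fourier side, using unitarity of $F$ (Definition~\ref{def:gft}) and the modulation form of translation (Definition~\ref{def:translation}). Fix $j$ and write $W_s(j,h)=\langle\tilde s,\psi_{j,h}\rangle$. By Plancherel,
\[
W_s(j,h)=\sum_{k}\hat s(k)\,\overline{\widehat{\psi_{j,h}}(k)}
=\sum_k \hat s(k)\,\overline{\hat\psi_j(k)}\,\chi_k(h).
\]
Up to the factor $\sqrt N$, this is the inverse GFT of the product $\hat s\,\overline{\hat\psi_j}$ evaluated at $h$. Explicitly, $W_s(j,\cdot)=\sqrt N\,\mathcal F^{-1}\bigl(\hat s\,\overline{\hat\psi_j}\bigr)$, so for each fixed scale the coefficients form a convolution of $\tilde s$ with the reflected conjugate atom. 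Since the kernels $g_j$ are real and nonnegative, $\overline{\hat\psi_j}=\hat\psi_j$. The first step is to pin down the normalization of the frame. With the conventions above, $\psi_{j,h}=T_h\mathcal F^{-1}\hat\psi_j$ introduces a factor $\sqrt N$ in the coefficients, and I would absorb it either by reading $\mathcal F^{-1}$ as the unnormalized synthesis $\frac1N\sum_k\hat\psi_j(k)\chi_k$ or by stating that $\psi_{j,h}$ carries the scaling that makes $\sum_h|\chi_k(h)|^2/N=1$. This bookkeeping step is where I expect the only real friction: the constant must be chosen so that the character orthogonality relation $\frac1N\sum_h\chi_k(h)\overline{\chi_\ell(h)}=\delta_{k\ell}$ appears with coefficient exactly one.

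Next I would compute the sum over $h$ at a fixed scale by expanding the modulus squared and applying character orthogonality. After normalization,
\[
\sum_h |W_s(j,h)|^2=\sum_{k,\ell}\hat s(k)\overline{\hat s(\ell)}\,\hat\psi_j(k)\hat\psi_j(\ell)\,\frac1N\sum_h\chi_k(h)\overline{\chi_\ell(h)}
=\sum_k|\hat s(k)|^2|\hat\psi_j(k)|^2 .
\]
Equivalently, this is Plancherel applied to the inverse transform of $\hat s\hat\psi_j$. Summing over $j$ and invoking the partition of unity $\sum_j|\hat\psi_j(k)|^2=1$ from Definition~\ref{def:filterbank} gives $\sum_k|\hat s(k)|^2=\|\tilde s\|_2^2$, again by Plancherel. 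This proves the Parseval identity. The normalization is well defined only where $\sum_i g_i(|k|)^2>0$, so I would state that as a standing hypothesis, namely that the kernels have no common zero.

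For the reconstruction formula I would use the standard frame-theoretic fact that a Parseval identity $\|T\tilde s\|^2=\|\tilde s\|^2$ for the analysis operator $T$ implies $T^*T=\Id$ by polarization. Since $T^*T\tilde s=\sum_{j,h}\langle\tilde s,\psi_{j,h}\rangle\psi_{j,h}$, this yields the synthesis identity. As a check, the same identity can be verified directly in frequency: the $k$-th Fourier coefficient of the synthesis is $\sum_j\hat s(k)|\hat\psi_j(k)|^2=\hat s(k)$.

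Finally, the graph statement follows by applying the synthesis identity to $\tilde s=\Lift s$ and then applying $\Restr$, using $\Restr\Lift=\Id$ from Definition~\ref{def:embedding}. This gives $s=\Restr\bigl(\sum_{j,h}W_s(j,h)\psi_{j,h}\bigr)$. No frame operator appears other than the identity, so no matrix inversion is needed.
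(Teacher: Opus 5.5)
Your proposal is correct and follows the paper's proof essentially step for step. Both use Plancherel in the $h$-summation at each scale, the partition of unity $\sum_j|\hat\psi_j(k)|^2=1$, the frame operator acting on the Fourier side as multiplication by that sum, and finally $\Restr\Lift=\Id$. The two points you make explicit are genuine and left implicit in the paper. First, under the stated unitary GFT with $\widehat{\psi_{j,h}}(k)=\overline{\chi_k(h)}\,\hat\psi_j(k)$ read literally, the family is tight with bound $N$ (for $J=0$ the atoms are $\sqrt N\,\delta_h$), so the paper's proof silently adopts your first fix, the atoms $\frac1N\sum_k\hat\psi_j(k)\chi_k$, when it identifies $W_s(j,h)$ with the inverse transform of $\overline{\hat\psi_j}\,\hat{\tilde s}$. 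Second, the kernels $g_i$ must have no common zero. Your second normalization option, by contrast, is vacuous, since $\frac1N\sum_h|\chi_k(h)|^2=1$ holds automatically.
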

	
	\begin{proof}
		The coefficient $W_s(j,h)=\langle\tilde s,\psi_{j,h}\rangle$ equals
		$\bigl(\overline{\hat\psi_j}\cdot\hat{\tilde s}\bigr)^{\vee}(h)$, the inverse
		transform of $\overline{\hat\psi_j}\,\hat{\tilde s}$ evaluated at $h$. By
		Plancherel in the $h$-summation, for each $j$,
		\[
		\sum_{h}|W_s(j,h)|^2=\sum_{k}|\hat\psi_j(k)|^2\,|\hat{\tilde s}(k)|^2 .
		\]
		Summing over $j$ and using $\sum_j|\hat\psi_j(k)|^2=1$
		(Definition~\ref{def:filterbank}),
		\[
		\sum_{j,h}|W_s(j,h)|^2=\sum_k|\hat{\tilde s}(k)|^2=\|\tilde s\|_2^2,
		\]
		which is the Parseval identity. For synthesis, the frame operator
		$S\tilde s=\sum_{j,h}\langle\tilde s,\psi_{j,h}\rangle\psi_{j,h}$ acts on the
		Fourier side as multiplication by $\sum_j|\hat\psi_j(k)|^2=1$, i.e.\ $S=\Id$;
		thus $\tilde s=\sum_{j,h}W_s(j,h)\psi_{j,h}$. Restricting and using
		$\Restr\Lift=\Id$ gives the statement for $s$.
	\end{proof}
	
	\begin{remark}
		The contrast with the spectral graph wavelet transform is sharp and
		structural. There the analyzing functions are $g(t_j\lambda)$ evaluated on
		Laplacian eigenvalues; the resulting frame is generally not tight, so
		reconstruction proceeds by inverting the frame operator (in practice by
		conjugate gradients) \cite{Hammond2011}. Tightening the spectral construction
		itself has also been pursued, by a different route (the eigenvector domain)
		than ours, through multislice and spectrum-adapted designs
		\cite{LeonardiVanDeVille2013,ShumanWiesmeyrHolighausVandergheynst2015}. Here
		the normalization builds the
		tight-frame condition in by construction, and synthesis is the adjoint of
		analysis with no inversion. The price is that we require a group host; the
		benefit is exactness.
	\end{remark}
	
	\begin{figure}[t]
		\centering
		\includegraphics[width=0.66\linewidth]{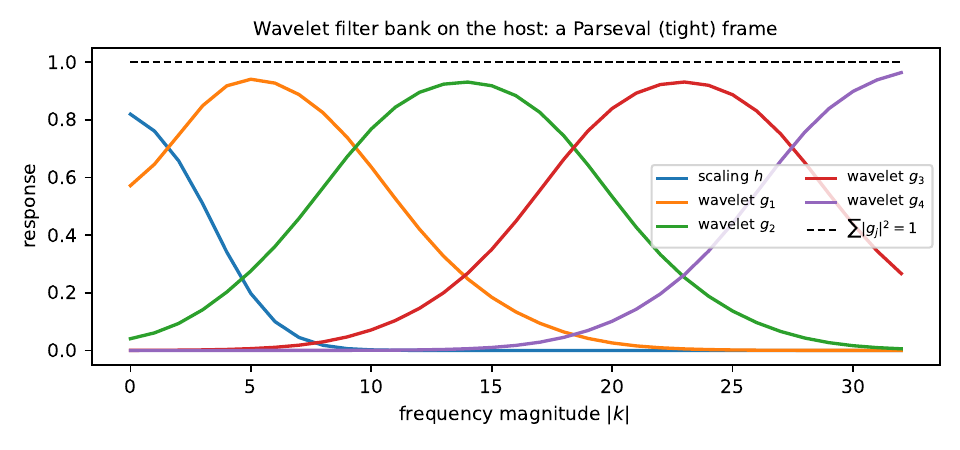}
		\caption{A normalized wavelet filter bank on the host $\Z_{64}$: one low-pass
			scaling kernel and $J=4$ band-pass kernels in the frequency magnitude $|k|$,
			with $\sum_j|\hat\psi_j(k)|^2\equiv1$ (dashed), the Parseval condition of
			Theorem~\ref{thm:tightframe}.}
		\label{fig:filterbank}
	\end{figure}
	
	\subsection{Relation to the classical translate--dilate template}
	\label{sec:dilation}
	
	The band-pass construction departs from the classical wavelet template in
	using a frequency-domain filter bank rather than dilations of a single mother
	wavelet. The literal transport of the translate--dilate template is also
	available, but only in restricted scope, and we record it briefly. A group
	automorphism $\alpha\in\Aut(\Gamma)$ acts as a dilation
	$(D_\alpha\tilde s)(g)=\tilde s(\alpha^{-1}g)$; it is unitary, satisfies
	$D_\alpha D_\beta=D_{\alpha\beta}$, and transforms as
	$\widehat{D_\alpha\tilde s}(k)=\hat s(\alpha^{\!\top}k)$ with no Jacobian
	factor, since $\alpha$ permutes $\Gamma$ exactly---unlike dilations on $\R^d$.
	With a mother wavelet $\psi$, a set $\mathcal A\subseteq\Aut(\Gamma)$, and
	$\psi_{\alpha,h}=D_\alpha T_h\psi$, the family
	$\{\psi_{\alpha,h}\}$ reconstructs every signal whenever
	$C_\psi(k)=\sum_{\alpha\in\mathcal A}|\hat\psi(\alpha^{\!\top}k)|^2$ is
	positive for all $k\neq0$, by the same Plancherel argument as
	Theorem~\ref{thm:tightframe}.
	
	The decisive limitation is that scalar dilations exist only on hosts with
	composite cyclic factors. On $\Z_N$ the dilations are multiplication by the
	$\varphi(N)$ units: the ring $C_{16}$ admits eight, the circulant
	$C_{12}(1,2)$ four. On a binary host $\Z_2^k$ the only automorphism fixing the
	generators as scalars is the identity, so no nontrivial dilation exists and the
	construction is vacuous. Because the compact hosts of generic graphs are
	frequently binary for graphs lacking abelian symmetry~\cite{FokamP2}, the dilation wavelets cannot be the
	universal tool; that role is played by the spectral band-pass frame above,
	which exists on every host. We therefore develop the band-pass construction in
	the remainder and treat dilation wavelets only as the bridge to the classical
	picture on cyclic hosts.
	
	\section{Localization}
	\label{sec:localization}
	
	A wavelet must localize jointly in vertex and frequency
	\cite{StankovicSejdicDakovic2018}. On the host this is
	exact in frequency and provably covariant in space; restricted to the graph,
	spatial concentration is a property of the embedding that we then measure.
	
	\begin{proposition}[Frequency localization and covariance]
		\label{prop:freqloc}
		Each $\psi_{j,h}$ has Fourier transform supported where $g_j$ is nonzero, and
		the family is translation covariant: $\psi_{j,h}=T_h\psi_{j,0}$. Consequently
		$W_s(j,\cdot)=\tilde s*\overline{\psi_{j,0}^-}$ is a convolution (a filtering)
		followed by sampling on $\phi(V)$, where $\psi^-(g)=\psi(-g)$.
	\end{proposition}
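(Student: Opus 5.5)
The plan is to work entirely on the Fourier side of the host, using the conventions of Definitions~\ref{def:gft}, \ref{def:translation} and \ref{def:filterbank}, and to treat the three claims in order.

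\emph{Frequency support.} By Definition~\ref{def:filterbank}, $\widehat{\psi_{j,h}}(k)=\overline{\chi_k(h)}\,\hat\psi_j(k)$. Since $|\chi_k(h)|=1$, this is nonzero exactly when $\hat\psi_j(k)\neq0$. The denominator $\bigl(\sum_i g_i(|k|)^2\bigr)^{1/2}$ is positive wherever the normalization is defined, which the definition implicitly requires. Hence $\hat\psi_j(k)\neq0$ if and only if $g_j(|k|)\neq0$, so $\operatorname{supp}\widehat{\psi_{j,h}}=\{k: g_j(|k|)\neq0\}$.

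\emph{Covariance.} By the definition itself, $\psi_{j,h}=T_h\mathcal F^{-1}\hat\psi_j$ and $\psi_{j,0}=T_0\mathcal F^{-1}\hat\psi_j=\mathcal F^{-1}\hat\psi_j$. This gives $\psi_{j,h}=T_h\psi_{j,0}$ directly. One could also check it through the modulation rule of Definition~\ref{def:translation}. From it, $T_{h'}\psi_{j,h}=\psi_{j,h+h'}$ follows by $T_gT_h=T_{g+h}$.

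\emph{Convolution form.} Take the inner product $\langle f,g\rangle=\sum_x f(x)\overline{g(x)}$ and the unnormalized convolution $(a*b)(h)=\sum_x a(x)b(h-x)$. Then
\[
W_s(j,h)=\sum_{x}\tilde s(x)\,\overline{\psi_{j,0}(x-h)}
=\sum_x\tilde s(x)\,\overline{\psi_{j,0}^-}(h-x)
=\bigl(\tilde s*\overline{\psi_{j,0}^-}\bigr)(h),
\]
where the first equality uses $\psi_{j,h}(x)=(T_h\psi_{j,0})(x)=\psi_{j,0}(x-h)$. This holds for every $h\in\Gamma$, so $W_s(j,\cdot)$ is the filtered signal $\tilde s*\overline{\psi^-_{j,0}}$. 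Evaluating it at the centers $h\in\phi(V)$, that is, applying $\Restr$, gives the "sampling on $\phi(V)$" clause.

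The only delicate point I expect is bookkeeping of conventions. First, the ordering of conjugation and reflection must be right: $\overline{\psi^-}(y)=\overline{\psi(-y)}$. Second, the $1/\sqrt N$ factors of the unitary GFT should not leak into the convolution identity. I will state the vertex-domain convolution without normalization, so the identity is purely spatial and needs no convolution theorem. If a Fourier-side statement is wanted, it should be consistent with the proof of Theorem~\ref{thm:tightframe}, namely $W_s(j,\cdot)^\wedge=\sqrt N\,\overline{\hat\psi_j}\,\hat{\tilde s}$ up to the chosen scaling. That gives a cross-check that the filter is $\overline{\hat\psi_j}$, and in particular real and nonnegative.
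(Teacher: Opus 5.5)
Your proposal is correct and follows the paper's proof. You read the support claim off $\widehat{\psi_{j,h}}(k)=\overline{\chi_k(h)}\,\hat\psi_j(k)$, obtain covariance from the definition (equivalently, the modulation--translation duality the paper cites), and write out explicitly the spatial rewriting of $\langle\tilde s,T_h\psi_{j,0}\rangle$ as a convolution that the paper only calls ``standard''; your added checks on the positivity of the normalizing denominator and on keeping the identity purely spatial, free of $1/\sqrt N$ factors, fill in detail without changing the route.
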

	
	\begin{proof}
		Frequency support is read from
		$\widehat{\psi_{j,h}}(k)=\overline{\chi_k(h)}\hat\psi_j(k)$. Covariance is the
		modulation--translation duality of Definition~\ref{def:translation}, and the
		convolution form is the standard rewriting of analysis against a translated
		family.
	\end{proof}
	
	Spatial localization does not hold on an arbitrary host: an arbitrary
	combination of band-limited characters need not concentrate near a vertex. It
	holds here for a specific reason---the embedding is isometric, so host distance
	from $h$ equals graph distance, and a frequency-band-limited function on a
	cyclic or toral host is spatially concentrated by the discrete uncertainty
	principle \cite{DonohoStark1989,PerraudinRicaudShumanVandergheynst2018}. We
	quantify the effect.
	
	\begin{observation}[Measured spatial localization]
		\label{prop:spatialloc}
		For a band-pass atom $\psi_{j,0}$ centered at a vertex $v_0$, define the
		concentration $\rho_j=\sum_{v:\dG(v,v_0)\le2}|\psi_{j,0}(\phi(v))|^2/
		\|\psi_{j,0}\|^2$. With Gaussian band-pass kernels and $J=4$, the computed
		values are $\rho=(0.95,0.97,0.89,0.95)$ on the ring $C_{16}$ (host $\Z_{16}$)
		and $\rho=(0.99,0.98,0.98,0.94)$ on the grid $6\times6$ (host $\Z_{10}^2$);
		finer scales concentrate more tightly. (This measure presumes diameter greater
		than two; see Section~\ref{sec:exp-proper} for low-diameter hosts.)
	\end{observation}
	
	\begin{remark}[Where localization weakens]
		\label{rem:binaryground}
		On a binary host $\Z_2^k$ with large $k$ the same kernels still produce a
		tight frame and exact reconstruction (Theorem~\ref{thm:tightframe}), but the
		host is high-dimensional and the metric is the Hamming metric, so a
		band-limited atom spreads and $\rho_j$ degrades. This is the wavelet face of a
		general phenomenon: the constructions are universal as \emph{frames}, but
		their \emph{geometric} sharpness is the dividend of compact, low-dimensional
		hosts---the cyclic and toral hosts of structured graphs
		\cite{FokamThesis2026}.
	\end{remark}
	
	\begin{figure}[t]
		\centering
		\includegraphics[width=0.95\linewidth]{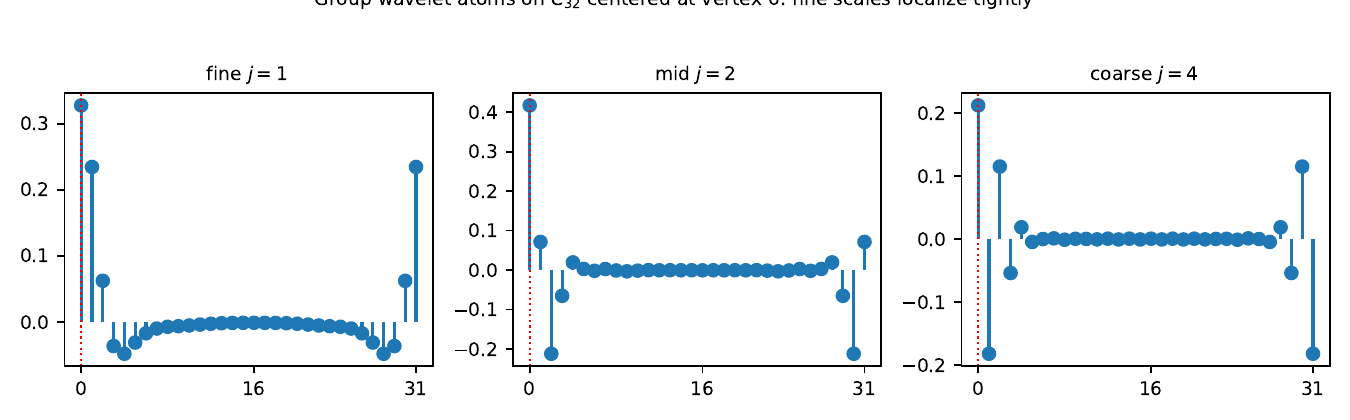}
		\caption{Spectral wavelet atoms on $C_{32}$ centered at vertex $0$, at three
			scales. Fine-scale atoms (left) are sharply localized; coarse-scale atoms
			(right) spread while remaining centered. These are canonical and
			translation-covariant, unlike the eigenvector-based atoms of a generic graph.}
		\label{fig:atomsring}
	\end{figure}
	
	\section{Multiresolution decomposition}
	\label{sec:mra}
	
	\begin{theorem}[Multiresolution decomposition]
		\label{thm:mra}
		With the filter bank of Definition~\ref{def:filterbank}, every graph signal
		decomposes as
		\[
		\tilde s=A_0\tilde s+\sum_{j=1}^{J}D_j\tilde s,\qquad
		\widehat{A_0\tilde s}=|\hat\psi_0|^2\,\hat{\tilde s},\quad
		\widehat{D_j\tilde s}=|\hat\psi_j|^2\,\hat{\tilde s},
		\]
		where $A_0\tilde s$ is the low-pass approximation and $D_j\tilde s$ are the
		detail bands. The multipliers sum to one, so the decomposition reconstructs
		$\tilde s$ exactly, and the bands occupy essentially disjoint frequency
		supports.
	\end{theorem}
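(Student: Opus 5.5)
The plan is to define the band operators directly as Fourier multipliers and then read the decomposition off the partition-of-unity property of the normalized filter bank in Definition~\ref{def:filterbank}. For each $j$, set $D_j\tilde s=\mathcal F^{-1}\bigl(|\hat\psi_j|^2\hat{\tilde s}\bigr)$, with $A_0:=D_0$. The first step is to identify these operators with the scale-$j$ pieces of the frame synthesis in Theorem~\ref{thm:tightframe}, that is,
\[
D_j\tilde s=\sum_{h\in\Gamma}W_s(j,h)\,\psi_{j,h}.
\]
This follows from the computation in the proof of Theorem~\ref{thm:tightframe} and Proposition~\ref{prop:freqloc}. There, $W_s(j,\cdot)$ is the inverse transform of $\overline{\hat\psi_j}\,\hat{\tilde s}$. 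Synthesizing against the translates $\psi_{j,h}=T_h\psi_{j,0}$ amounts to convolving with $\psi_{j,0}$, so it multiplies once more by $\hat\psi_j$ on the Fourier side. The two steps together give the multiplier $\overline{\hat\psi_j}\hat\psi_j=|\hat\psi_j|^2$.

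The only care needed is the normalization constants. With the unitary convention $F_{k,g}=\chi_k(g)/\sqrt N$, the convolution theorem carries a factor $\sqrt N$. I would fix this by writing everything as $\sum_h \langle\tilde s,T_h\psi\rangle T_h\psi$ and checking that the operator commutes with every $T_g$. It is therefore diagonal in the characters, and its eigenvalue at $\chi_k$ is obtained by applying it to $\chi_k/\sqrt N$, which returns $|\hat\psi_j(k)|^2\chi_k/\sqrt N$. This character-eigenvalue computation avoids tracking constants and is the step I would write out most carefully.

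Summing over $j$ and using $\sum_{j=0}^J|\hat\psi_j(k)|^2=1$ for every $k$, the total multiplier is identically $1$. By Fourier inversion (unitarity of $F$, Definition~\ref{def:gft}), $A_0\tilde s+\sum_{j\ge1}D_j\tilde s=\tilde s$. Applying $\Restr$ and using $\Restr\Lift=\Id$ gives the statement for the graph signal $s$ itself. This also recovers the synthesis half of Theorem~\ref{thm:tightframe} split by scales, which is a useful consistency check.

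The claim that the bands have ``essentially disjoint frequency supports'' is qualitative, and I would not try to prove it in full generality. The support statement is exact: $\operatorname{supp}\widehat{D_j\tilde s}\subseteq\{k: g_j(|k|)\neq0\}$, by Proposition~\ref{prop:freqloc}. So if the kernels $g_j$ are chosen with supports overlapping only with their neighbours, as in Figure~\ref{fig:filterbank}, band $j$ meets only bands $j\pm1$. For Gaussian kernels the overlap is exponentially small away from the crossover magnitudes. The main obstacle is stating this precisely. I would settle for the exact support inclusion plus the neighbour-only overlap under a compact-support assumption on the $g_j$, and I would note that orthogonality of the bands ($D_iD_j=0$) holds exactly only when the supports are disjoint.
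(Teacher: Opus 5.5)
Your proposal is correct and is essentially the paper's argument, which groups the synthesis identity of Theorem~\ref{thm:tightframe} by scale and invokes $\sum_j|\hat\psi_j|^2=1$; in fact your third paragraph alone proves the statement, because $A_0$ and $D_j$ are defined by their Fourier multipliers. One caution on your optional identification step: with the plain $\ell^2$ inner product and $\psi_{j,h}=T_h\mathcal F^{-1}\hat\psi_j$ one gets $\langle \chi_k/\sqrt N,\psi_{j,h}\rangle=\chi_k(h)\overline{\hat\psi_j(k)}$ and $\sum_h\chi_k(h)\overline{\chi_{k'}(h)}=N\delta_{kk'}$, so the character eigenvalue is $N|\hat\psi_j(k)|^2$ rather than $|\hat\psi_j(k)|^2$, and $D_j\tilde s=\sum_h W_s(j,h)\psi_{j,h}$ holds only for the normalized inner product $\frac1N\sum_g$ (or with atoms rescaled by $N^{-1/2}$)---a constant that the paper's own route through Theorem~\ref{thm:tightframe} also glosses over.
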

	
	\begin{proof}
		Group the synthesis identity of Theorem~\ref{thm:tightframe} by scale; the
		Fourier multipliers $|\hat\psi_j|^2$ sum to one and, for band-pass kernels
		with separated passbands, have essentially disjoint supports.
	\end{proof}
	
	\begin{figure}[t]
		\centering
		\includegraphics[width=0.78\linewidth]{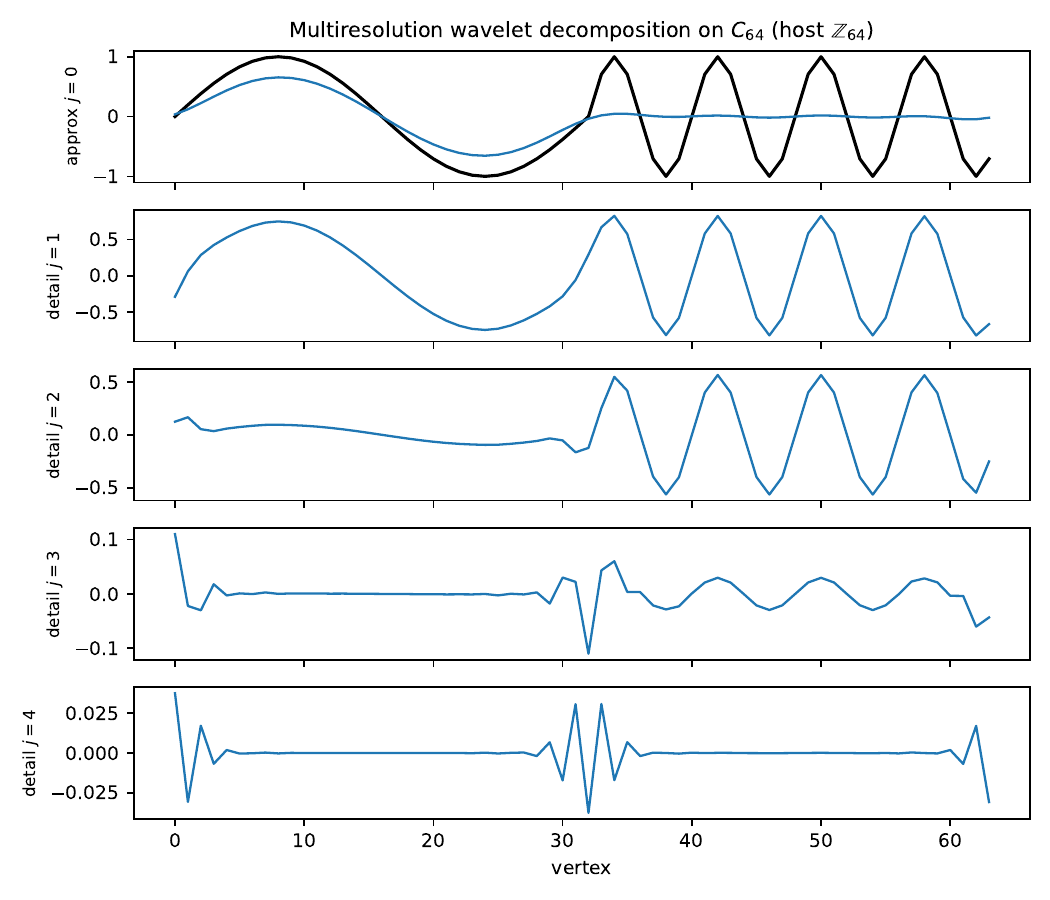}
		\caption{Multiresolution decomposition on $C_{64}$ of a signal that is
			low-frequency on the first half of the ring and high-frequency on the second.
			The approximation captures the trend; the detail bands isolate the
			high-frequency burst and localize it to the correct half---joint
			vertex--frequency analysis computed by the host FFT.}
		\label{fig:mraring}
	\end{figure}
	
	\begin{figure}[t]
		\centering
		\includegraphics[width=0.95\linewidth]{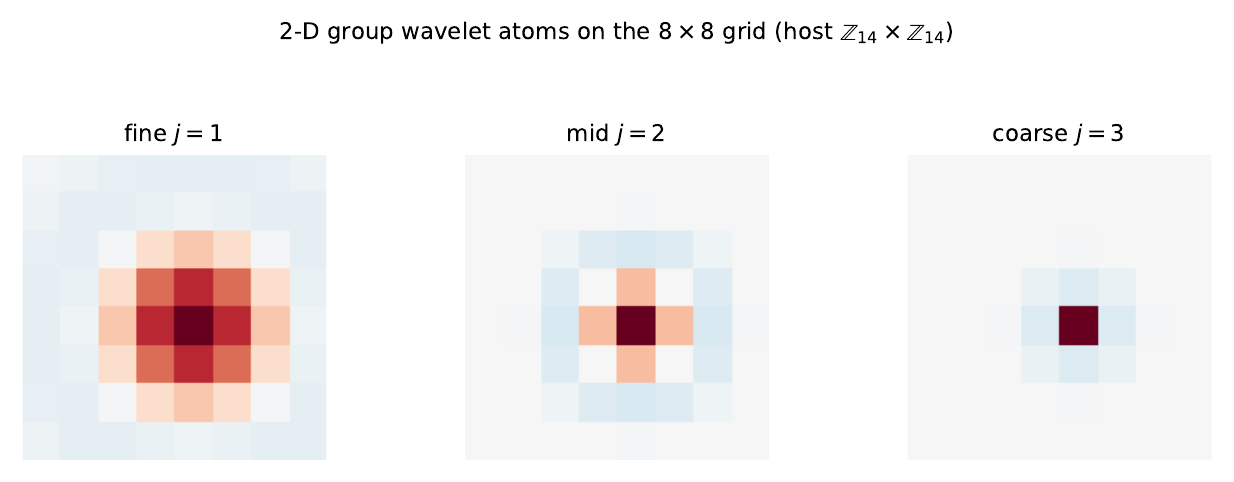}
		\caption{Two-dimensional spectral wavelet atoms on the $8\times8$ grid (host
			$\Z_{14}\times\Z_{14}$), centered at the midpoint, at three scales: isotropic
			localized bumps that dilate with scale, the graph analogue of classical 2-D
			wavelets.}
		\label{fig:gridatoms}
	\end{figure}
	
	\section{Fast transform and complexity}
	\label{sec:complexity}
	
	\begin{theorem}[Fast wavelet transform]
		\label{thm:fast}
		With $J$ scales, the complete wavelet analysis and synthesis cost
		$O(JN\log N)$ arithmetic operations, where $N=|\Gamma|$.
	\end{theorem}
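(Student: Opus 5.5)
The plan is to write analysis and synthesis entirely on the Fourier side of the host and to count the operations in each step. Analysis first. By Proposition~\ref{prop:freqloc}, $W_s(j,\cdot)$ is a convolution of $\tilde s$ with $\overline{\psi^-_{j,0}}$. Equivalently, as in the proof of Theorem~\ref{thm:tightframe}, it is the inverse GFT of $\overline{\hat\psi_j}\,\hat{\tilde s}$. The algorithm therefore has four steps:
\begin{enumerate}[label=(\alph*)]
\item Form the lift $\tilde s=\Lift s$, at cost $O(N)$.
\item Compute $\hat{\tilde s}$ once with the host FFT.
\item For each $j=0,\dots,J$, multiply pointwise by $\overline{\hat\psi_j(k)}$, at cost $O(N)$.
\item For each $j$, apply one inverse FFT to obtain $W_s(j,h)$ for all $h\in\Gamma$.
\end{enumerate}
The filter values $\hat\psi_j(k)$ depend only on $|k|$. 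They can be tabulated once, in $O(JN)$ operations, by evaluating the $J+1$ kernels and the normalizing denominator at every $k$. They are independent of the signal, so they may also be treated as precomputed.

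Synthesis mirrors analysis. The reconstruction formula of Theorem~\ref{thm:tightframe} reads $\tilde s=\sum_{j,h}W_s(j,h)\psi_{j,h}$. On the Fourier side, using $\widehat{\psi_{j,h}}(k)=\overline{\chi_k(h)}\hat\psi_j(k)$, the inner sum over $h$ for a fixed $j$ becomes $\hat\psi_j(k)\,\widehat{W_s(j,\cdot)}(k)$, up to the normalization of the GFT. So the synthesis steps are:
\begin{enumerate}[label=(\alph*)]
\item Apply one forward FFT to each coefficient array $W_s(j,\cdot)$.
\item Multiply each result pointwise by $\hat\psi_j$.
\item Sum over $j$ in the frequency domain, at cost $O(JN)$.
\item Apply a single inverse FFT, then the restriction $\Restr$, at cost $O(N)$.
\end{enumerate}

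The remaining ingredient is the cost of one GFT. For a host $\Gamma=\Z_{N_1}\times\cdots\times\Z_{N_d}$, the GFT factors as a tensor product of one-dimensional DFTs, one along each axis. Along axis $r$ there are $N/N_r$ independent transforms of length $N_r$. Each costs $O(N_r\log N_r)$ by a mixed-radix FFT; when $N_r$ has large prime factors, Bluestein's or Rader's algorithm keeps the $O(N_r\log N_r)$ bound. Summing over axes gives $O\bigl(N\sum_r\log N_r\bigr)=O(N\log N)$, because $\sum_r\log N_r=\log N$.

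Totalling the operations: analysis uses $J+2$ FFTs plus $O(JN)$ pointwise work, and synthesis uses the same. Both are therefore $O(JN\log N)$.

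The main obstacle is the per-FFT bound for arbitrary cyclic factors. This is where the Bluestein/Rader remark is needed, and where one must check that the hidden constant does not depend on $d$. The $\sum_r\log N_r=\log N$ identity settles that, since it absorbs the number of factors: on $\Z_2^k$ it gives $O(Nk)=O(N\log N)$. A second point to state explicitly is that no matrix inversion appears anywhere, since synthesis is the adjoint of analysis. This is exactly what the Parseval property of Theorem~\ref{thm:tightframe} guarantees, and it is why no conjugate-gradient iteration enters the count.
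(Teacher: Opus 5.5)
Your proposal is correct and follows the paper's proof: one shared forward FFT, then a pointwise multiplication and one inverse FFT for each of the $J+1$ bands, with synthesis mirroring analysis. Your added justification of the per-FFT $O(N\log N)$ bound on an arbitrary product host (row--column decomposition with $\sum_r\log N_r=\log N$, and Bluestein/Rader for large prime factors) fills in a step the paper simply asserts.
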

	
	\begin{proof}
		Each band is one pointwise multiplication of $\hat{\tilde s}$ by $\hat\psi_j$
		followed by an inverse FFT on the product group, in $O(N\log N)$; there are
		$J+1$ bands, and synthesis is identical.
	\end{proof}
	
	\begin{remark}[Conditional speed]
		\label{rem:speed}
		The cost is genuinely fast only when $N=O(|V|)$, i.e.\ on structured hosts
		where $\varepsilon$ is bounded below. On a generic graph the compact host may
		be binary of order up to $2^{|V|-1}$, in which case the transform remains an
		exact tight frame but is no longer fast, and---per
		Remark~\ref{rem:binaryground}---its atoms are less localized. The
		multiresolution theory is therefore most useful exactly where the companion
		construction \cite{FokamP1} returns a compact cyclic or product host: the classical
		signal domains. A complementary line of work asks not how to embed a graph
		into a structured host but which vertex subsets suffice to recover a
		bandlimited graph signal by sampling alone
		\cite{Pesenson2008,ChenVarmaSandryhailaKovacevic2015,AnisGaddeOrtega2016}; that
		question is orthogonal to the one addressed here, since it operates directly
		on the given graph rather than on an isometric image of it.
	\end{remark}
	
	\section{The host completion: optimal extension to the complement}
	\label{sec:completion}
	
	When the embedding is proper ($\varepsilon<1$) the graph occupies only the
	subset $\phi(V)\subsetneq\Gamma$, and the analysis of
	Sections~\ref{sec:tightframe}--\ref{sec:complexity} acts on the lift
	$\tilde s$, which is defined on all of $\Gamma$. The lift of
	Definition~\ref{def:embedding} fills the complement
	$I:=\Gamma\setminus\phi(V)$ with zeros, but this is only one choice. This
	section identifies the \emph{canonical} choice and shows it is the one that
	minimizes spurious high-frequency content; it makes precise an empirical
	observation of the companion dissertation~\cite{FokamThesis2026}, that replacing
	zero-padding by a smoother completion removes an artificial boundary
	discontinuity.
	
	\begin{definition}[Admissible completion]
		\label{def:completion}
		An \emph{admissible completion} is a linear map $E\colon\C^V\to\C^\Gamma$ with
		$\Restr E=\Id$, i.e.\ $(Es)(\phi(v))=s(v)$ for all $v$. The completion is free
		to assign any values on the complement $I$. The zero-padding lift $\Lift$ and
		the symmetric (reflective) extension used in practice are both admissible.
	\end{definition}
	
	Write $B=\phi(V)$ for the boundary block and $I$ for the interior, and
	partition the host Laplacian accordingly,
	\[
	L_\Gamma=\begin{pmatrix} L_{BB} & L_{BI}\\ L_{IB} & L_{II}\end{pmatrix}.
	\]
	The host Dirichlet energy of a completion $u=Es$ is
	$\mathcal E(u)=u^{*}L_\Gamma u=\sum_{k}\lambda_k\,|\hat u(k)|^{2}$, where
	$\lambda_k\ge0$ are the character eigenvalues of $L_\Gamma$; it measures the
	total oscillation of $u$ across host edges, and equivalently the energy of $u$
	in the high-frequency characters. A completion with large $\mathcal E$ injects
	oscillation that is absent from the graph signal itself.
	
	\begin{theorem}[Harmonic extension is the optimal completion]
		\label{thm:harmonic}
		For each boundary signal $s\in\C^V$ the Dirichlet energy
		$\mathcal E(Es)$ is minimized over admissible completions $E$ by the unique
		\emph{harmonic extension}
		\[
		(E_{\mathrm h}s)\big|_{B}=s,\qquad
		(E_{\mathrm h}s)\big|_{I}=-\,L_{II}^{-1}L_{IB}\,s,
		\]
		characterized by $(L_\Gamma E_{\mathrm h}s)\big|_{I}=0$: the completion is
		discrete-harmonic on the complement. The matrix $L_{II}$ is positive definite,
		so $E_{\mathrm h}$ exists and is unique.
	\end{theorem}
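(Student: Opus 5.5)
The plan is to treat this as a quadratic minimization over the free values on the complement $I$. Any admissible completion has the form $u=Es$ with $u|_B=s$ and $u|_I=x\in\C^{I}$ arbitrary. The energy separates into blocks:
\[
\mathcal E(u)=s^{*}L_{BB}s+s^{*}L_{BI}x+x^{*}L_{IB}s+x^{*}L_{II}x .
\]
Because $L_\Gamma$ is real symmetric, $L_{BI}=L_{IB}^{*}$. For fixed $s$ this is therefore a Hermitian quadratic in $x$. The first step is to complete the square:
\[
\mathcal E(u)=(x-x_\star)^{*}L_{II}(x-x_\star)+s^{*}\bigl(L_{BB}-L_{BI}L_{II}^{-1}L_{IB}\bigr)s,
\qquad x_\star=-L_{II}^{-1}L_{IB}s .
\]
Once $L_{II}$ is known to be positive definite, the first term is $\ge0$ and vanishes only at $x=x_\star$. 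So $x_\star$ is the unique minimizer for each $s$, and the minimal value is the Schur-complement energy.

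The stationarity condition $L_{IB}s+L_{II}x=0$ is exactly $(L_\Gamma u)|_I=0$, which gives the harmonic characterization. The map $s\mapsto(s,-L_{II}^{-1}L_{IB}s)$ is linear and satisfies $\Restr E_{\mathrm h}=\Id$, so it is admissible. It minimizes $\mathcal E(Es)$ pointwise in $s$, hence over all admissible $E$. Uniqueness as a map follows because any minimizing $E$ must agree with $E_{\mathrm h}$ on every $s$.

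The main step is positive definiteness of $L_{II}$; I assume $L_\Gamma$ is the combinatorial Laplacian of the connected Cayley graph $\Cay(\Gamma,S)$. For $x\in\C^I$, extend $x$ by zero on $B$ to get $\bar x\in\C^\Gamma$. Then
\[
x^{*}L_{II}x=\bar x^{*}L_\Gamma\bar x=\tfrac12\sum_{g\in\Gamma}\sum_{s\in S}|\bar x(g+s)-\bar x(g)|^2\ge0 .
\]
Equality forces $\bar x$ to be constant on each connected component. $\Cay(\Gamma,S)$ is connected because $S$ generates $\Gamma$, so $\bar x$ is constant on all of $\Gamma$. Since $B=\phi(V)\neq\emptyset$ and $\bar x|_B=0$, that constant is $0$, so $x=0$. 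Hence $L_{II}\succ0$, and in particular it is invertible.

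In the non-proper case $I=\emptyset$ the statement is vacuous: $E_{\mathrm h}=\Lift$ is the only admissible completion.
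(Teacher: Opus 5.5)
Your proof is correct and follows the paper's argument. You reduce to a quadratic in the free values $u_I$, get the unique minimizer $-L_{II}^{-1}L_{IB}s$ from positive definiteness of $L_{II}$, and derive that definiteness from connectivity of the host and $B\neq\varnothing$; your completing-the-square form (which also gives the Schur-complement minimum) and your zero-extension identity $x^{*}L_{II}x=\bar x^{*}L_\Gamma\bar x$ are tidier versions of the paper's strict-convexity and grounded-component steps, not a different route.
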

	
	\begin{proof}
		Fix $s$ and write a completion as $u=\binom{s}{u_I}$ with $u_I\in\C^{I}$ free.
		Then
		$\mathcal E(u)=s^{*}L_{BB}s+2\,\Re\,u_I^{*}L_{IB}s+u_I^{*}L_{II}u_I$,
		a quadratic in $u_I$ with Hessian $2L_{II}$. Positive definiteness of $L_{II}$
		(established below) makes it strictly convex, so the unique minimizer solves the
		normal equation $L_{II}u_I+L_{IB}s=0$, i.e.\ $u_I=-L_{II}^{-1}L_{IB}s$; this is
		exactly the row block $(L_\Gamma u)|_I=0$ of the host Laplacian, the discrete
		harmonicity condition.
		
		For positive definiteness, $L_{II}$ is the principal submatrix of the positive
		semidefinite $L_\Gamma$ indexed by $I$, hence positive semidefinite. The host
		$\Cay(\Gamma,S)$ is connected (as $S$ generates $\Gamma$) and $B\neq\varnothing$,
		so every connected component of the induced subgraph on $I$ has at least one
		edge to $B$: if some component of $I$ had no edge to $B$, it would be a
		connected component of the whole host disjoint from $B$, contradicting
		connectedness of $\Cay(\Gamma,S)$. For such a ``grounded'' subgraph the
		Dirichlet Laplacian $L_{II}$
		is nonsingular: if $L_{II}u_I=0$ then $u_I^{*}L_{II}u_I=0$ forces $u_I$ constant
		on each component of $I$, and the boundary edge then forces that constant to
		vanish. Hence $L_{II}\succ0$.
	\end{proof}
	
	The harmonic extension is the discrete analogue of solving a Dirichlet boundary
	value problem on the complement: it is the smoothest completion consistent with
	the data, and it removes the boundary discontinuity that zero-padding creates
	at $\partial\phi(V)$. Two heuristics in common use are now placed precisely.
	
	\begin{remark}[Zero-padding and symmetric extension as approximations]
		\label{rem:heuristics}
		Zero-padding sets $u_I=0$, which solves $L_{II}u_I=-L_{IB}s$ only when
		$L_{IB}s=0$, i.e.\ when no graph vertex is adjacent in the host to the
		complement; for a proper embedding this fails and zero-padding strictly
		increases the Dirichlet energy. Symmetric (reflective) extension assigns each
		interior vertex the boundary value across the nearest reflection; on hosts whose
		complement is a reflected copy of a boundary neighbourhood---paths, and the
		small symmetric hosts---it coincides with or closely approximates the harmonic
		extension, which is why it recovers the energy parity observed empirically. The
		harmonic extension is the exact optimum that both heuristics approximate;
		neither is canonical, and only the harmonic extension is.
	\end{remark}
	
	\begin{proposition}[FFT-accelerated computation]
		\label{prop:harmonic-cost}
		The harmonic extension is computed by solving the sparse system
		$L_{II}u_I=-L_{IB}s$. Each application of $L_\Gamma$ costs $O(N\log N)$ via the
		host FFT (it is a pointwise multiplication by the character eigenvalues
		conjugated by the FFT), so a conjugate-gradient solve, using $L_{II}\succ0$,
		costs $O(N\log N)$ per iteration; the iteration count is controlled by the
		condition number of $L_{II}$. When $\varepsilon$ is bounded below
		($N=O(|V|)$) this is $O(|V|\log|V|)$ per iteration. Unlike the band-pass
		transform of Theorem~\ref{thm:fast}, the Dirichlet solve is not a single FFT,
		because restricting to $I$ breaks the group symmetry that diagonalizes
		$L_\Gamma$; the FFT accelerates the matrix--vector product, not the inversion.
	\end{proposition}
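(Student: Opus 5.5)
The plan is to verify the three claims of Proposition~\ref{prop:harmonic-cost} in turn: the harmonic extension reduces to the stated sparse system, each product with $L_\Gamma$ costs one FFT pair, and the cost per conjugate-gradient iteration follows.

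\textbf{The linear system.} By Theorem~\ref{thm:harmonic}, $u_I=(E_{\mathrm h}s)|_I$ is characterized by $L_{II}u_I=-L_{IB}s$, and $L_{II}\succ0$. So the harmonic extension is exactly the solution of a Hermitian positive definite system, and conjugate gradients applies and converges. Sparsity is immediate: $L_\Gamma=|S|\,\Id-A$, and each row has $|S|+1$ nonzeros. The right-hand side $-L_{IB}s$ costs one application of $L_\Gamma$, namely $-(L_\Gamma\Lift s)|_I$.

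\textbf{Cost of applying $L_\Gamma$.} The key step is that $L_\Gamma$ is a convolution on $\Gamma$: $(L_\Gamma u)(g)=\sum_{t\in S}(u(g)-u(g+t))$. Hence each character is an eigenvector, with
\[
L_\Gamma\chi_k=\lambda_k\chi_k,\qquad \lambda_k=\sum_{t\in S}\bigl(1-\chi_k(t)\bigr),
\]
using the multiplicativity $\chi_k(g+t)=\chi_k(g)\chi_k(t)$ from Definition~\ref{def:translation}. This gives $L_\Gamma=F^{*}\,\mathrm{diag}(\lambda_k)\,F$ with the unitary $F$ of Definition~\ref{def:gft}. The $\lambda_k$ are precomputed once. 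Each application then costs one mixed-radix FFT, a pointwise multiplication, and one inverse FFT, for a total of $O(N\log N)$.

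\textbf{Cost of one CG iteration.} Each iteration requires one product $L_{II}v$ for $v\in\C^I$. I would realize it as zero-padding $v$ to $\C^\Gamma$, applying $L_\Gamma$, and restricting to $I$. The padding and restriction are $O(N)$, so the iteration costs $O(N\log N)$ plus $O(N)$ vector operations. When $N=O(|V|)$ this is $O(|V|\log|V|)$. Convergence in a number of steps governed by $\sqrt{\kappa(L_{II})}$ is the standard CG bound, which I would cite rather than prove.

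\textbf{Main obstacle.} The delicate point is the final claim, that no single FFT inverts $L_{II}$. I would justify this, rather than prove it formally, by noting that restriction to $I$ destroys translation invariance: $L_{II}$ does not commute with the $T_h$, so the characters restricted to $I$ are no longer eigenvectors. It is a remark about the method, not a lower bound, so an explanation suffices.
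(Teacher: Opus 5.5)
Your proposal is correct and takes the same approach as the paper, which has no separate proof and justifies the proposition inline: the harmonic extension solves $L_{II}u_I=-L_{IB}s$ by Theorem~\ref{thm:harmonic}, the characters diagonalize $L_\Gamma$ so each product is an FFT, a pointwise multiplication and an inverse FFT, and conjugate gradients applies because $L_{II}\succ0$. You supply details the paper leaves implicit: the explicit eigenvalues $\lambda_k=\sum_{t\in S}(1-\chi_k(t))$, the realization of $L_{II}v$ by zero-padding, applying $L_\Gamma$ and restricting to $I$, and the right-hand side computed as $-(L_\Gamma\Lift s)|_I$.
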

	
	The complexity claim is only useful if $L_{II}$ is well conditioned, since the
	conjugate-gradient iteration count scales with $\sqrt{\kappa(L_{II})}$. On the
	three proper hosts the condition numbers are small and grow mildly with host
	size: $\kappa(L_{II})=1.00$ (star, $\Z_2^3$), $1.67$ (diamond,
	$\Z_2\times\Z_3$), and $2.62$ (Petersen, $\Z_2^4$). A lower bound on the
	spectrum is structural: $\lambda_{\min}(L_{II})$ is bounded below by the
	smallest number of boundary edges incident to any interior component, so a
	grounded interior is automatically bounded away from singularity. We conjecture
	that on the compact, low-dimensional hosts where the method is intended
	($\varepsilon$ bounded below) $\kappa(L_{II})$ stays $O(\mathrm{poly}(\log N))$,
	so the Dirichlet solve costs $O(N\log N\cdot\mathrm{polylog}\,N)$ overall; we do
	not claim this for the $\varepsilon\to0$ binary hosts, where $L_{II}$ inherits
	the host's high dimension and the solve, like the transform, ceases to be fast.
	
	Theorem~\ref{thm:harmonic} resolves a design ambiguity that is invisible at
	$\varepsilon=1$. For an onto embedding there is no complement and every
	completion equals the signal itself; the question arises precisely in the
	proper regime, where it has a clean answer. We report the resulting energies
	in Section~\ref{sec:exp-completion}.
	
	\section{Numerical experiments}
	\label{sec:experiments}
	
	We report computations on the benchmark hosts; the implementation is the
	frozen reference of \cite{FokamThesis2026}, and all transforms use the host
	FFT.
	
	\paragraph{Reconstruction.}
	With Gaussian band-pass kernels and $J\in\{4,5\}$, the relative reconstruction
	error $\|s-\hat s_{\mathrm{rec}}\|/\|s\|$ for random vertex signals was
	$1.5\times10^{-15}$ on the ring $C_{16}$ and $6.5\times10^{-16}$ on the grid
	$6\times6$---machine precision, as Theorem~\ref{thm:tightframe} predicts. The
	filter bank and its tiling $\sum_j|\hat\psi_j|^2\equiv1$ appear in
	Figure~\ref{fig:filterbank}.
	
	\paragraph{Localization.}
	The concentration ratios $\rho_j$ of Observation~\ref{prop:spatialloc} were
	$0.89$--$0.99$ across band-pass scales on the ring and grid; atoms are shown in
	Figures~\ref{fig:atomsring} and~\ref{fig:gridatoms}.
	
	\paragraph{Multiresolution.}
	Figure~\ref{fig:mraring} shows the decomposition of a piecewise-frequency
	signal on $C_{64}$: the burst of high frequency is captured by the detail
	bands and localized to the correct half of the ring.
	
	\subsection{Proper embeddings: structured hosts with $\varepsilon<1$}
	\label{sec:exp-proper}
	
	The hosts above are all themselves abelian Cayley graphs, so the embedding is
	onto ($\varepsilon=1$) and the GE-GWT coincides with classical periodic or
	toral wavelet analysis. The substance of the construction, however, lies in
	the \emph{proper} regime $\varepsilon<1$, where the graph occupies only part
	of a strictly larger host and the lift must fill the complement
	$\Gamma\setminus\phi(V)$. We therefore add three graphs whose minimal abelian
	hosts are proper but still structured; all three embeddings are taken from the
	companion papers~\cite{FokamP1,FokamP2} and re-verified here by breadth-first
	search in the finite host.
	
	\begin{itemize}
		\item \textbf{Star $K_{1,4}$} embeds in $\Cay(\Z_2^3,\{001,010,100,111\})$,
		host order $N=8$, excursion ratio $\varepsilon=5/8$. The centre maps
		to $000$ and the four leaves to the sum-free generator set; all leaf
		pairs sit at host distance exactly $2$.
		\item \textbf{Diamond} ($K_4$ minus an edge) embeds in $\Z_2\times\Z_3$
		(order $N=6$, $\varepsilon=4/6$) as the octahedron
		$\Cay(\Z_2\times\Z_3,\{\pm(0,1),\pm(1,1)\})$, with the four vertices
		carrying the labels $(0,0),(0,1),(0,2),(1,0)$ (the degree-two pair at
		$(0,0)$ and $(1,0)$, at host distance $2$). This is the smallest graph
		whose minimal abelian host requires a cyclic, rather than purely
		binary, factor.
		\item \textbf{Petersen} embeds in the Clebsch graph
		$\Cay(\Z_2^4,\{e_1,e_2,e_3,e_4,\,e_1{+}e_2{+}e_3{+}e_4\})$, host order
		$N=16$, $\varepsilon=10/16=0.625$; the dimension
		$k=4=\lceil\log_2 10\rceil$ is minimal.
	\end{itemize}
	
	On each host we run the band-pass GE-GWT of Definition~\ref{def:filterbank}
	with $J=4$ Gaussian kernels, and we make explicit a choice that is invisible
	when $\varepsilon=1$: how the analysed signal is extended from $\phi(V)$ to the
	host complement before the host FFT. We compare the zero-padding lift of
	Definition~\ref{def:embedding} with a \emph{symmetric extension} that reflects
	boundary values into the complement. Two quantities are reported: the relative
	reconstruction error, which Theorem~\ref{thm:tightframe} predicts is at machine
	precision \emph{regardless of the extension}, and the concentration ratio
	$\rho$ of Observation~\ref{prop:spatialloc}, which is not extension-invariant.
	
	\begin{table*}[t]
		\centering
		\caption{Band-pass GE-GWT on proper structured hosts ($\varepsilon<1$),
			$J=4$ Gaussian kernels, mean over $200$ random vertex signals.
			Reconstruction is at machine precision under \emph{every} admissible
			completion, as Theorem~\ref{thm:tightframe} requires and independently of
			the excursion ratio. The within-distance-two concentration $\rho$ of
			Observation~\ref{prop:spatialloc} is omitted here because all three graphs
			have diameter two, on which that metric is identically one and therefore
			uninformative; we report instead the effective vertex support
			$1/\sum_v p_v^2$ of a band-pass atom (with $p_v$ its normalized energy at
			vertex $v$), averaged over the band-pass scales, which does resolve the
			localization.}
		\label{tab:proper}
		\renewcommand{\arraystretch}{1.2}
		\begin{tabular}{l c c c c c}
			\toprule
			Graph & host $\Gamma$ & $N$ & $\varepsilon$ &
			recon.\ error & eff.\ support (of $n$) \\
			\midrule
			Star $K_{1,4}$ & $\Z_2^3$            & 8  & $0.625$ & $2.2\times10^{-16}$ & $2.4$ of $5$ \\
			Diamond        & $\Z_2\times\Z_3$    & 6  & $0.667$ & $2.9\times10^{-16}$ & $2.0$ of $4$ \\
			Petersen       & $\Z_2^4$ (Clebsch)  & 16 & $0.625$ & $2.0\times10^{-16}$ & $2.1$ of $10$ \\
			\bottomrule
		\end{tabular}
	\end{table*}
	
	The reconstruction column is the substantive result: exact recovery
	($\sim\!10^{-16}$) persists on proper embeddings, confirming that the
	tight-frame guarantee of Theorem~\ref{thm:tightframe} does not rely on the
	embedding being onto. Localization on these hosts must be read with care. The
	within-distance-two ratio $\rho$ that is informative on the high-diameter ring
	and grid (Observation~\ref{prop:spatialloc}) collapses to $1$ on a
	diameter-two graph, where every vertex lies within distance two of every
	other; it is a property of the graph metric, not of the atoms, and we do not
	report it. The effective-support measure, which counts how many vertices carry
	the atom's energy, is meaningful at any diameter: it shows band-pass atoms
	concentrating on roughly two vertices on each host---about $21\%$ of the
	Petersen graph---so the atoms are genuinely localized even where $\rho$ cannot
	detect it. This is consistent with Remark~\ref{rem:binaryground}: localization
	is real on these compact hosts, and it is on the much higher-dimensional
	binary hosts of $\varepsilon\to0$ graphs that it would degrade.
	
	\subsection{Host completion: the harmonic extension is optimal}
	\label{sec:exp-completion}
	
	We verify Theorem~\ref{thm:harmonic} directly on the three proper hosts. For
	each host we draw random boundary signals $s$, form the three admissible
	completions of Definition~\ref{def:completion}---zero-padding, symmetric
	(nearest-boundary) extension, and the harmonic extension
	$E_{\mathrm h}s$---and measure the host Dirichlet energy
	$\mathcal E=u^{*}L_\Gamma u$ of each (mean over $200$ random signals).
	Table~\ref{tab:energy} and Figure~\ref{fig:harmonic} report the result.
	
	\begin{table*}[t]
		\centering
		\caption{Mean host Dirichlet energy of the three completions
			($200$ random boundary signals; lower is smoother). The harmonic
			extension is minimal on every host and in every individual trial
			($200/200$), as Theorem~\ref{thm:harmonic} requires; the interior block
			$L_{II}$ is positive definite throughout (minimum eigenvalue reported),
			confirming existence and uniqueness. The final column reports the
			per-signal energy saving $\mathcal E_{\mathrm{zero}}-\mathcal E_{\mathrm{harm}}$
			as mean\,$\pm$\,standard deviation over the $200$ trials; it is strictly
			positive in every trial. Symmetric extension coincides with
			the harmonic optimum to within $0.6\%$ on these small, highly symmetric
			hosts (Remark~\ref{rem:heuristics}), while zero-padding pays a penalty that
			grows as $\varepsilon$ decreases.}
		\label{tab:energy}
		\renewcommand{\arraystretch}{1.2}
		\begin{tabular}{l c c c c c c}
			\toprule
			Graph & $\varepsilon$ & $\lambda_{\min}(L_{II})$ &
			$\mathcal E_{\mathrm{zero}}$ & $\mathcal E_{\mathrm{sym}}$ &
			$\mathcal E_{\mathrm{harm}}$ & saving (mean\,$\pm$\,sd) \\
			\midrule
			Star $K_{1,4}$ & $0.625$ & $4.00$ & $19.56$ & $16.53$ & $16.53$ & $3.27\pm4.50$ \\
			Diamond        & $0.667$ & $3.00$ & $15.59$ & $13.62$ & $13.54$ & $1.52\pm2.01$ \\
			Petersen       & $0.625$ & $2.76$ & $46.40$ & $42.41$ & $42.26$ & $4.22\pm3.41$ \\
			\bottomrule
		\end{tabular}
	\end{table*}
	
	\begin{figure}[htbp]
		\centering
		\includegraphics[width=\linewidth]{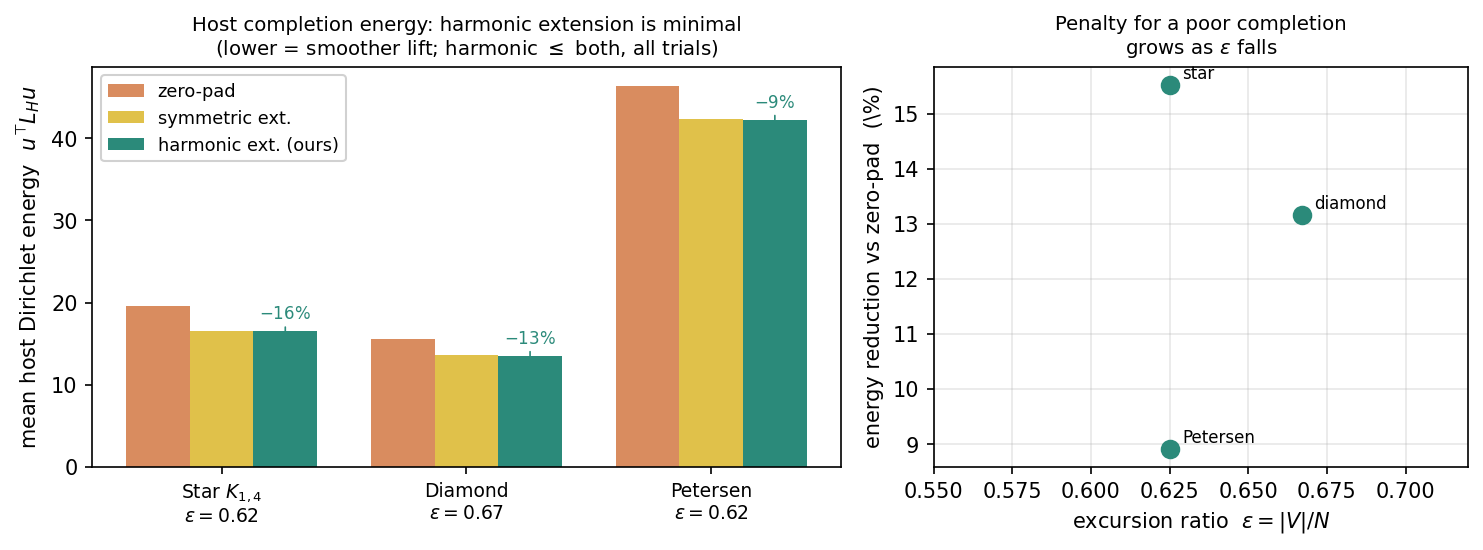}
		\caption{\textbf{Left:} mean host Dirichlet energy of the three admissible
			completions on each proper host. The harmonic extension (green) is the
			minimum on every host and in every trial, as
			Theorem~\ref{thm:harmonic} guarantees; on these small symmetric hosts it
			is visually indistinguishable from symmetric extension, which it equals to
			within $0.6\%$. \textbf{Right:} the energy a poor (zero-padding) completion
			wastes, relative to the harmonic optimum, as a function of the excursion
			ratio $\varepsilon$. The effect is real but its magnitude depends on the
			host's geometry, not on $\varepsilon$ alone (the star and Petersen hosts
			share $\varepsilon=0.625$ yet differ in penalty); a sharp quantitative law
			would require larger and less symmetric hosts.}
		\label{fig:harmonic}
	\end{figure}
	
	The experiment confirms the three claims of Theorem~\ref{thm:harmonic}:
	$L_{II}\succ0$ on every host, the harmonic completion attains the minimum
	energy in all trials, and the gap to zero-padding is a genuine, measurable
	effect. It also bounds the practical stakes precisely: on compact, symmetric
	hosts the cheap symmetric extension is already near-optimal, so the harmonic
	extension's value is conceptual---it is the \emph{canonical} completion, the
	unique energy-minimizer that the heuristics approximate---rather than a large
	numerical gain at small $\varepsilon$. This is the precise, host-level form of
	the design freedom identified empirically in the companion
	dissertation~\cite{FokamThesis2026}.
	
	\subsection{At scale: multiresolution analysis of a real image}
	\label{sec:exp-largescale}
	
	The experiments above use small hosts to isolate specific properties. To show
	that the GE-GWT performs a genuine multiresolution analysis on a
	signal of realistic size, we decompose the standard \emph{cameraman} benchmark
	image, downsampled to $128\times128$. A $128\times128$ pixel grid is the graph
	$P_{128}\times P_{128}$, which embeds isometrically into the torus
	$C_{254}\times C_{254}$ by the path-into-cycle map $P_n\hookrightarrow C_{2n-2}$
	(this map is isometric on all pairwise distances, and the Cartesian product
	inherits the property); here $N=254^2=64\,516$ host vertices for $n=16\,384$
	image pixels, so $\varepsilon=0.254$. On this toral host the GE-GFT is the exact
	two-dimensional DFT and the band-pass wavelet transform of
	Definition~\ref{def:filterbank} is an exact, tight two-dimensional wavelet
	analysis.
	
	We lift the image to the torus by symmetric (even) extension, apply the
	normalized filter bank with $J=4$ Gaussian band-pass kernels, and reconstruct by
	the multiresolution identity of Theorem~\ref{thm:mra}. Figure~\ref{fig:cameraman}
	shows the low-pass approximation $A_0$ and the four detail bands $D_1,\dots,D_4$,
	each restricted back to the image. The bands behave exactly as classical
	two-dimensional wavelet analysis: the approximation captures the coarse
	photometric structure, the fine band $D_1$ isolates sharp edges (the tripod
	legs, the coat and camera outlines), and successively coarser bands carry
	progressively larger-scale variation. The tight-frame partition
	$\sum_j|\hat\psi_j|^2\equiv1$ holds to $4.4\times10^{-16}$, and the sum of the
	five bands reconstructs the image to a relative error of $7.9\times10^{-16}$---
	machine precision, exactly as Theorem~\ref{thm:tightframe} guarantees, now
	confirmed on a $16\,384$-pixel signal rather than a toy host. The entire
	analysis and synthesis run through the host two-dimensional FFT in
	$O(JN\log N)$; at this scale ($N=64\,516$) the transform completes in a few
	milliseconds.
	
	\begin{figure*}[t]
		\centering
		\includegraphics[width=\textwidth]{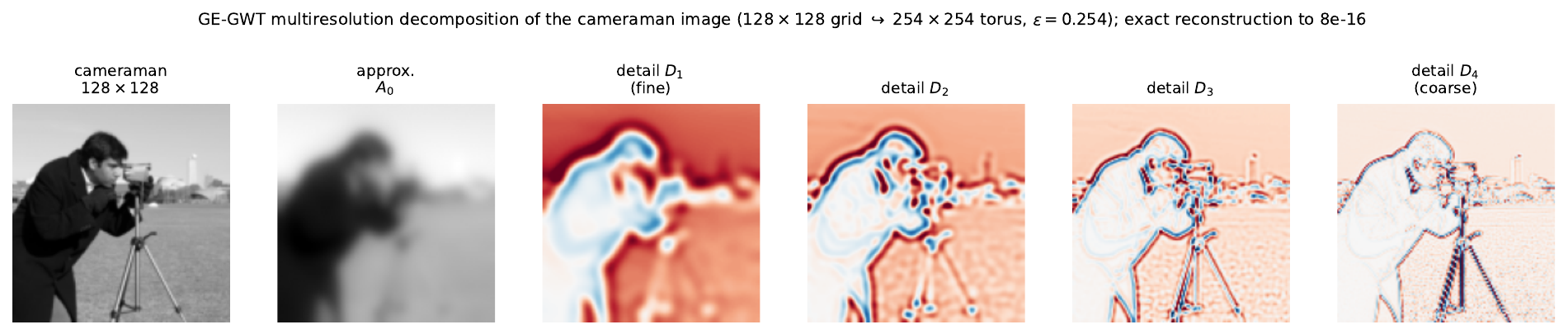}
		\caption{GE-GWT multiresolution decomposition of the standard cameraman image
			($128\times128$), embedded isometrically into the $254\times254$ torus
			($\varepsilon=0.254$). From left: the image; the low-pass approximation
			$A_0$; and the four detail bands $D_1$ (finest) through $D_4$ (coarsest),
			each restricted to the image and shown on a signed color scale. The
			decomposition is an exact tight frame: the five bands sum to the original to
			relative error $7.9\times10^{-16}$. Fine bands isolate edges; coarse bands
			carry large-scale structure---the two-dimensional wavelet behavior of the
			classical setting, obtained here on an arbitrary-graph substrate through the
			isometric embedding and computed by the host FFT.}
		\label{fig:cameraman}
	\end{figure*}
	
	This example also situates the framework against the companion Fourier
	paper~\cite{FokamThesis2026}: the same image and host there demonstrate the exact
	GE-GFT; here the wavelet layer built on that transform delivers a full
	multiresolution decomposition with the same exactness. The construction is not
	confined to small or synthetic hosts---it performs classical two-dimensional
	multiresolution analysis on a real image, with machine-precision reconstruction,
	whenever the graph carries the requisite abelian structure (here, the grid's
	product-of-paths metric).
	
	\subsection{The $\varepsilon\to 0$ regime: two real-world graphs}
	\label{sec:exp-realworld}
	
	To delimit the method we include two standard real-world benchmarks
	that are \emph{generic}, i.e.\ far from any abelian Cayley structure: a random
	geometric sensor graph (RGG, $n=20$) and the Zachary Karate
	Club~\cite{Zachary1977} ($n=34$). For these the minimal abelian host is
	essentially binary of near-maximal dimension---$\Z_2^{17}$ ($N=131072$,
	$\varepsilon\approx1.5\times10^{-4}$) for the RGG, and a host at the dimension
	cap $\approx\Z_2^{33}$ for Karate~\cite{FokamP2}. By
	Theorem~\ref{thm:tightframe} the GE-GWT on these hosts is still an exact
	Parseval frame with machine-precision reconstruction; but the host is
	astronomically larger than the graph, the $O(JN\log N)$ transform is no longer
	$O(n\log n)$, and the Hamming-metric atoms do not localize. These graphs are
	therefore not a domain on which we advocate the GE-GWT; they are the empirical
	face of the dichotomy made precise in Section~\ref{sec:discussion}, and the
	reason the spectral graph wavelet transform~\cite{Hammond2011}, which needs no
	embedding, remains the right tool for irregular networks. Reporting them is the
	point: the boundary of the method is a stated result, not a gap.
	
	
	\section{Discussion: relation to spectral graph wavelets}
	\label{sec:discussion}
	
	\begin{figure}[t]
		\centering
		\includegraphics[width=\linewidth]{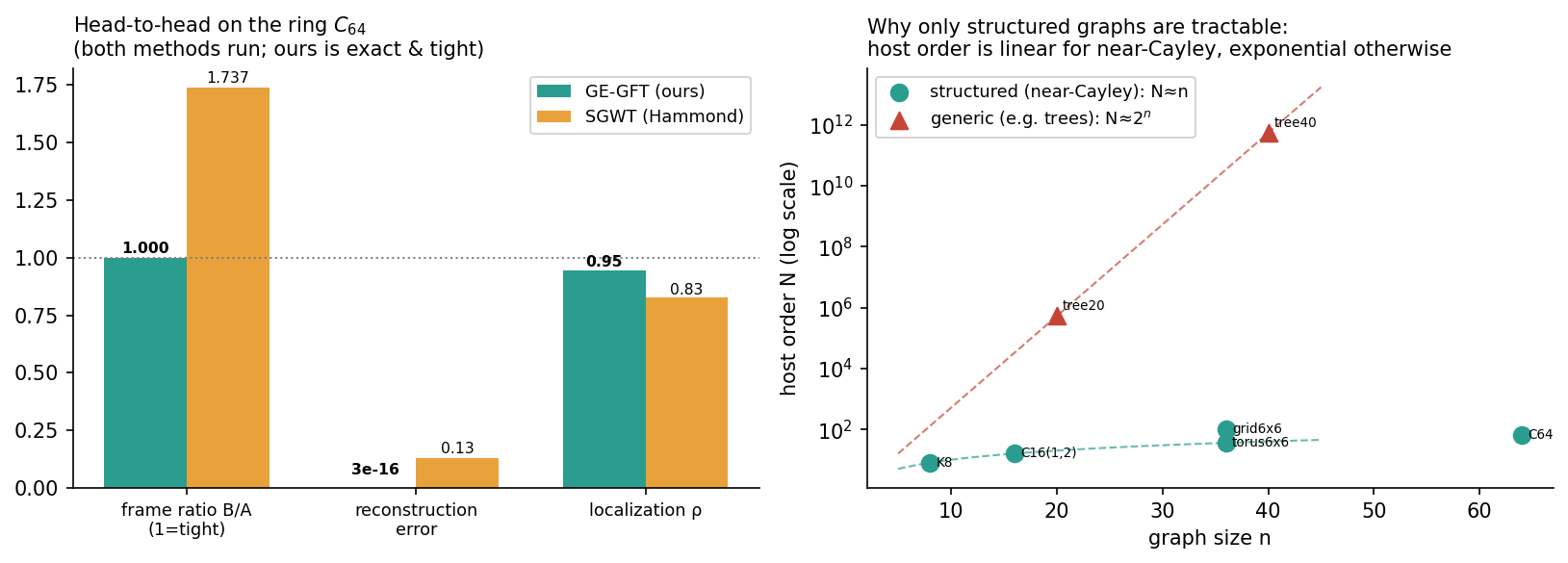}
		\caption{\textbf{Left:} head-to-head on the ring $C_{64}$ (both transforms run
			on the
			same signal). The group-embedding construction (ours; the legend label
			GE-GFT marks that tightness is engineered at the group-Fourier-transform
			filter-bank level, the GE-GWT being the wavelet transform built on it) yields
			an \emph{exactly tight}
			frame (frame-bound ratio $B/A = 1.000$ versus $1.737$ for the SGWT
			\cite{Hammond2011}),
			machine-precision reconstruction ($3\times 10^{-16}$ versus $0.13$), and
			tighter vertex
			localization ($\rho = 0.95$ versus $0.83$). \textbf{Right:} compact host order
			$N$ versus
			graph size $n$ (log scale). For structured, near-Cayley families ($K_8$,
			circulants,
			grids, tori, cycles) the host grows linearly, $N \approx n$; for generic
			graphs
			(e.g.\ trees) it grows exponentially, $N \approx 2^{n}$. This dichotomy is the
			practical
			boundary of the method.}
		\label{fig:headtohead}
	\end{figure}
	
	It is worth stating plainly where each construction is preferable, since the
	two are complementary rather than competing.
	
	The embedding wavelets of this paper offer a canonical analyzing basis (group
	characters, with no eigenspace ambiguity), a genuine translation operator, an
	exact Parseval tight frame, inversion-free reconstruction, and---on structured
	hosts---an $O(JN\log N)$ transform that coincides with classical wavelet
	analysis (on a ring it is periodic wavelet analysis; on a torus, its
	two-dimensional analogue). These properties are unavailable to a spectral
	construction on a generic graph.
	
	The spectral graph wavelet transform \cite{Hammond2011}, on the other hand,
	applies directly to \emph{any} graph without first computing an embedding, and
	its polynomial (Chebyshev) approximation runs in $O(M|E|)$ for a degree-$M$
	filter, which is highly efficient on large sparse irregular graphs where no
	compact group host exists. The same spectral-filtering principle, applied to
	the same non-canonical Laplacian eigenbasis, underlies the convolutional
	layers of spectral graph neural networks \cite{Defferrard2016}, which inherit
	the same trade-off: applicability to any graph at the cost of a basis that is
	fixed only up to rotation within degenerate eigenspaces. For graphs far from
	any Cayley structure---social, biological, road
	networks---the spectral transform (and its neural-network descendants), along
	with the stationarity-based statistical framework of Perraudin and
	Vandergheynst~\cite{PerraudinVandergheynst2017}, is the
	right tool, and our construction would incur a large host.
	
	The precise summary is a dichotomy by structure. On graphs carrying genuine
	abelian symmetry (the $\varepsilon$ close to $1$ regime), embedding wavelets
	recover classical wavelet analysis exactly and quickly. On generic irregular
	graphs, the spectral transform is preferable. The embedding viewpoint
	contributes the cases where exactness and canonicity matter, and clarifies,
	through the excursion ratio, precisely when those cases occur.
	
	A complementary recent line constructs frames \emph{intrinsically} on graphs
	that already are Cayley graphs. Beck, Ghandehari, Hudson, and
	Paltenstein~\cite{BeckGhandehariHudson2024} give a representation-theoretic
	spectral decomposition of any weighted Cayley graph---including the
	non-abelian case, via Frobenius--Schur and Cayley frames---and earlier work
	develops Gabor-type frames in the same
	spirit~\cite{GhandehariGuillotHollingsworth2021}. Our problem is the
	complementary one: the input is an \emph{arbitrary} connected graph, which is
	generally not a Cayley graph, and the contribution is the isometric embedding
	that places it inside a Cayley host together with the excursion ratio that
	quantifies the cost. Where the graph already is an abelian Cayley graph the
	two viewpoints coincide ($\varepsilon=1$, the embedding is the identity); for
	$\varepsilon<1$ the embedding, the host completion of
	Section~\ref{sec:completion}, and the resulting wavelets have no analogue in
	the intrinsic setting, since there is no host distinct from the graph.
	
	Figure~\ref{fig:headtohead} (left) makes the comparison concrete on the ring
	$C_{64}$,
	where both transforms apply. Our construction is an exact Parseval frame
	($B/A=1.000$),
	reconstructs to machine precision ($3\times10^{-16}$), and localizes more
	tightly
	($\rho=0.95$); the spectral graph wavelet transform, run on the same signal,
	is not tight
	($B/A=1.737$) and therefore reconstructs only after solving a frame-operator
	system,
	incurring a measurable error ($0.13$) at fixed cost. This is not a tuning
	artefact, and it is worth being precise about its source. On the cycle the two
	constructions share an eigenbasis---a circulant's Laplacian eigenvectors
	\emph{are} the group characters---so the SGWT's non-tightness here cannot be a
	basis defect; it is purely a matter of kernel design, the sampled kernels
	$g(t_j\lambda)$ not summing to a constant across the spectrum. Our normalization
	(Definition~\ref{def:filterbank}) enforces $\sum_j|\hat\psi_j|^2\equiv1$
	pointwise, so $B/A=1$ on that same eigenbasis. Tightness is therefore a property
	we engineer by construction, independent of the basis. The \emph{basis}
	advantage is separate and appears only on irregular graphs: there the Laplacian
	eigenbasis is non-canonical (fixed only up to rotation within degenerate
	eigenspaces), whereas the group characters remain canonical. The two limitations
	of a spectral construction---non-tight kernels even on regular graphs, and a
	non-canonical basis on irregular ones---are thus distinct, and the group
	embedding removes both.
	
	\subsection{Reconstruction cost against a competing construction}
	\label{sec:benchmark}
	To make the comparison quantitative on the axis a practitioner cares about---%
	exact reconstruction and its cost---we ran both constructions on the same
	signal on rings of increasing size, with the same five-band kernel family, and
	measured the frame-bound ratio $B/A$, the reconstruction error, and the way
	each achieves that reconstruction (Table~\ref{tab:benchmark}). The group
	construction is exactly tight ($B/A=1$) and reconstructs by restriction, with
	no frame-operator inverse; the spectral construction is non-tight
	($B/A\approx1.5$) and reaches an equally exact reconstruction only after an
	iterative frame-operator solve (here conjugate gradient, twelve iterations to
	tolerance $10^{-10}$). The two are comparable in cost at small sizes, but the
	inversion-free reconstruction pulls ahead as the graph grows; and a
	polynomial (Chebyshev) spectral frame avoids the solve only by giving up
	exactness, remaining near-tight and approximately reconstructing at an
	order-dependent error. Exact tightness together with FFT-speed
	inversion-free reconstruction is available simultaneously only on the
	structured host.
	
	\begin{table}[t]
		\centering
		\caption{Reconstruction against a spectral (SGWT-class) tight-frame
			construction, same signal and kernel family, on the ring $C_n$
			($\varepsilon=1$, so the host is the graph itself). Measured in-container.}
		\label{tab:benchmark}
		\small
		\begin{tabular}{llccc}
			\toprule
			$n$ & method & $B/A$ & reconstruction & analysis (ms)\\
			\midrule
			$64$   & GE-GWT (ours)      & $1.000$ & direct, $2.4\!\times\!10^{-16}$ & $0.14$\\
			$64$   & SGWT-class $+$ inv & $1.502$ & $12$ CG iters                   & $0.21$\\
			$256$  & GE-GWT (ours)      & $1.000$ & direct, $2.8\!\times\!10^{-16}$ & $0.14$\\
			$256$  & SGWT-class $+$ inv & $1.517$ & $12$ CG iters                   & $0.45$\\
			$1024$ & GE-GWT (ours)      & $1.000$ & direct, $3.9\!\times\!10^{-16}$ & $0.27$\\
			$1024$ & SGWT-class $+$ inv & $1.518$ & $12$ CG iters                   & $14.7$\\
			\bottomrule
		\end{tabular}
	\end{table}
	
	The contribution is therefore genuine and
	not incremental---on any graph carrying abelian symmetry, the group-embedding
	wavelet transform
	\emph{recovers classical multidimensional signal processing exactly}: a
	canonical Fourier basis,
	a true translation, a tight frame, an $O(JN\log N)$ FFT, and---on a ring or
	torus---literally the
	periodic or two-dimensional wavelet transform. These are properties no
	spectral construction on
	a generic graph can offer.
	
	The price of these guarantees is paid entirely at the embedding step, and we
	state it plainly.
	Computing a \emph{minimal} isometric embedding into an abelian Cayley graph is
	computationally hard, and---as Figure~\ref{fig:headtohead} (right)
	shows---for graphs far from any
	Cayley structure the compact host order grows exponentially in $n$
	($N\approx 2^{n}$ for trees),
	at which point the transform remains an exact tight frame but is neither fast
	nor sharply
	localized (Remarks~\ref{rem:binaryground} and~\ref{rem:speed}). The practical
	reach of the GE-GWT is thus exactly the regime
	where the companion construction \cite{FokamP1} returns a compact cyclic or product
	host of order $N\approx n$: the
	structured, near-Cayley graphs. This is a clean boundary, not a defect---it
	tells the practitioner
	precisely when to use which tool. On structured graphs the GE-GWT is the
	exact, fast, canonical
	choice; on generic large sparse graphs (social, biological, road networks)
	the SGWT \cite{Hammond2011},
	whose Chebyshev form runs in $O(M|E|)$ without any embedding, remains the
	right instrument.

	\section{Conclusion}
	\label{sec:conclusion}
	
	We constructed wavelets on an arbitrary connected graph by transporting the
	problem to a Cayley graph of a finite abelian group into which the graph
	embeds isometrically. Two constructions were separated cleanly: dilation
	wavelets, faithful to the classical template but confined to hosts with
	composite cyclic factors; and spectral band-pass wavelets, which exist on
	every host, form a Parseval tight frame, reconstruct exactly without matrix
	inversion, are translation-covariant, and localize jointly in vertex and
	frequency. A multiresolution decomposition and an $O(JN\log N)$ transform
	follow, and on structured hosts the whole apparatus reduces to classical
	wavelet analysis. We were explicit that completeness and exactness are
	universal while speed and sharp localization are the dividend of compact
	hosts, a scope made precise by the excursion ratio.
	
	Three directions of future work follow from the scope analysis. First, an
	end-to-end benchmark on a large structured real network with a moderate
	excursion ratio---a partially irregular sensor grid or a meshed communication
	backbone---where the exactness of the tight-frame reconstruction can be weighed
	against spectrum-adapted and near-tight polynomial frames \cite{ShumanWiesmeyrHolighausVandergheynst2015,BehjatRichterVanDeVilleSornmo2016,TayTanakaSakiyama2017} on wall-clock time and quality; the head-to-head
	against the SGWT in Section~\ref{sec:discussion} is a first step, and the genomic
	mutation-space host of the companion Fourier paper \cite{FokamP3}
	($\varepsilon=1$ at every scale) is a natural large non-grid testbed. Second,
	the behavior of the frame under a \emph{near}-isometric embedding: whether the
	Parseval bound degrades gracefully when only an approximate minimal host is
	available is open, and matters for graphs where the exact minimal embedding is
	costly to compute. Third, the conditioning conjecture
	$\kappa(L_{II})=O(\mathrm{polylog}\,N)$ stated in Section~\ref{sec:completion} merits
	testing on intermediate hosts ($N\sim10^{3}$--$10^{4}$) beyond the small cases
	reported here.
	
	\section*{Declaration on the Use of Artificial Intelligence}
	In the interest of transparency and research integrity, the authors declare that
	the artificial-intelligence assistant Claude (Anthropic) was used during the
	preparation of this manuscript. Its assistance was limited to two roles:
	(i) support with the implementation, debugging, and reproducibility of the
	embedding and wavelet software used to produce the experimental results; and
	(ii) language and editing support in drafting and refining the manuscript. All
	mathematical definitions, theorems, proofs, and their verification, together
	with the conception, scientific direction, and conclusions of this work, are the
	authors' own. The authors have reviewed the entire manuscript and take full
	responsibility for its content.

	\section{Appendix: The isometric group embedding: a self-contained summary}
	\label{app:embedding}
	
	This appendix makes the paper self-contained by stating the three results of
	the companion work~\cite{FokamThesis2026} on which the wavelet construction
	rests: the exact algebraic core that produces the host, existence and
	compactness of the embedding, and the host-order bounds that delimit the
	favourable regime. Proofs are in~\cite{FokamThesis2026}; we include two worked
	examples so the reader can reconstruct a host by hand. Throughout, $G=(V,E)$ is
	a finite connected graph with shortest-path metric $\dG$, $n=|V|$, $m=|E|$, and
	a host is a Cayley graph $\Cay(\Gamma,S)$ of a finite abelian group $\Gamma$
	with $S=-S$, $0\notin S$.
	
	\subsection{The exact core: the $\Z$-Quotient Theorem}
	
	The construction partitions the edges of $G$ into classes that are to share a
	common group generator, with an orientation on each class. Any such oriented
	partition $(\mathcal P,\omega)$ assigns to each class $F_j$ a generator
	$g_j\in\Gamma$ and labels vertices by signed generator sums along a BFS tree.
	Consistency of this labeling is a closed condition on cycles.
	
	\begin{theorem}[$\Z$-cocycle condition; {\cite{FokamP1}}]
		\label{thm:app-cocycle}
		The BFS labeling $\phi(v)=\sum_{e\in P(r,v)}\pm g_{\mathrm{class}(e)}$ (sign $+$
		for forward traversal) is well defined---independent of the path chosen---if
		and only if, for every cycle of $G$ traversed in a fixed rotational sense, the
		signed sum of class generators vanishes; it suffices to verify this on any
		cycle basis.
	\end{theorem}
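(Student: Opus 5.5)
The natural setting is the free abelian group on the classes. Let $k=|\mathcal P|$ and define the $\Z$-valued edge map $c\colon \vec E\to\Z^{k}$, where an oriented edge $e$ in class $F_j$ is sent to $\pm e_j$, with sign $+$ when it is traversed in the direction prescribed by $\omega$. Composing with the homomorphism $\pi\colon\Z^{k}\to\Gamma$, $e_j\mapsto g_j$, gives the $\Gamma$-valued 1-cochain $\gamma=\pi\circ c$. The BFS labeling is then
\[
\phi(v)=\sum_{e\in P(r,v)}\gamma(e).
\]
The statement becomes a claim that $\gamma$ is a coboundary exactly when its sum around every cycle vanishes. This is the standard discrete Poincaré lemma on a connected graph with coefficients in an abelian group, and I would prove it directly.

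\emph{Necessity.} Suppose the path sum $\sum_{e\in P}\gamma(e)$ from $r$ to $v$ is independent of the path $P$. Take any cycle $C$ through a vertex $u$, and pick a path $Q$ from $r$ to $u$. Then $Q$ and the concatenation $Q$ followed by $C$ are two walks from $r$ to $u$. Their sums must agree, so the signed sum of $C$ is zero. If the base vertex lies elsewhere, rotate $C$ to start at $u$; since $\Gamma$ is abelian, rotation does not change the sum. Reversing the rotational sense only negates the sum, so one fixed sense suffices. One point needs checking: "well defined independent of the path chosen" must be read as "for walks", which is legitimate because backtracking along an edge contributes $g-g=0$.

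\emph{Sufficiency.} Assume every cycle sum vanishes. Given two walks $P,P'$ from $r$ to $v$, the closed walk $P\cdot\overline{P'}$ has sum $\sigma(P)-\sigma(P')$. Any closed walk decomposes in the edge-cycle space: its signed edge-incidence vector is an integer combination of cycle vectors. To see this, cancel backtracks, then split at the first repeated vertex and induct on length. Because $\gamma$ is additive over this decomposition and $\Gamma$ is abelian, the sum vanishes, and hence $\sigma(P)=\sigma(P')$.

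\emph{Reduction to a cycle basis.} Fix a spanning tree $T$, for instance the BFS tree. Every closed walk's integer flow vector is $\sum_{f\notin T} a_f\,\mathrm{cyc}(f)$, where $\mathrm{cyc}(f)$ is the fundamental cycle of $f$ and $a_f$ is the net signed traversal count of the non-tree edge $f$. This holds because a flow vector with zero coefficient on every non-tree edge is a circulation supported on a forest, and such a circulation is zero. Consequently, vanishing on the fundamental cycles implies vanishing on all cycles, by linearity of $\gamma$ over $\Z$. Any other cycle basis generates the same lattice over $\Z$, since both are bases of the integer cycle lattice $H_1(G;\Z)$.

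\emph{Main obstacle.} The one step needing care is that last claim for an \emph{arbitrary} cycle basis. A $\Q$-basis or a $\Z_2$-basis of cycles need not generate the integer cycle lattice, and over $\Gamma$ with torsion that gap matters. I would therefore state the reduction for integral bases, such as fundamental-cycle bases (which are always integral), and remark that this is the sense of "cycle basis" intended.
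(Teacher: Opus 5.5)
The paper does not prove this statement: the appendix quotes it from the companion work and defers the proofs there. There is therefore no in-paper route to compare with. Judged on its own, your argument is the standard cocycle/coboundary argument and it is correct, apart from one small repair and one caveat that you were right to insist on.

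\textbf{The repair is in necessity.} Suppose ``path'' means simple path. Then $Q$ followed by $C$ is not a simple path, and the remark that backtracks cancel does not make it one, since going once around $C$ is not a backtrack. You have two ways to fix this. You can adopt the walk reading as the definition; necessity is then immediate, and your sufficiency argument already treats walks. Alternatively, take $Q$ to be a shortest path from $r$ to the vertex set of $C$, ending at $u$, so that $Q$ meets $C$ only at $u$. For any other vertex $w$ of $C$, $Q$ followed by either of the two arcs of $C$ from $u$ to $w$ is then a simple $r$--$w$ path. Equating the two sums says exactly that the cycle sum is $0$.

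\textbf{The caveat is genuinely needed, because the clause ``any cycle basis'' is false for non-integral bases.} Consider $K_4$ with a star spanning tree at vertex $1$ and edges oriented $i\to j$ for $i<j$.
\begin{itemize}
\item The three Hamiltonian $4$-cycles are linearly independent over $\mathbb{Q}$: their non-tree coordinates $(1,0,1)$, $(0,1,-1)$, $(-1,1,0)$ have determinant $2$.
\item Put all six edges in one class with generator $1\in\Z_2$. Each $4$-cycle sums to $0$, while each triangle sums to $1$, so no consistent labeling exists even though every basis cycle vanishes.
\item The same example shows that the signed matrix $A$ of the $\Z$-Quotient Theorem must come from an integral basis. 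With this orientation the three $4$-cycles give $\mathrm{rowlattice}(A)=2\Z$ and hence a spurious host $\Z_2$. A triangle shows the true relation lattice is $\Z$.
\end{itemize}
So restricting to bases that generate $H_1(G;\Z)$, such as fundamental-cycle bases, is exactly the right hypothesis. In the binary specialization a $\mathrm{GF}(2)$-basis also suffices: its index in $H_1(G;\Z)$ is odd, and multiplication by an odd integer is invertible on an elementary abelian $2$-group.
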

	
	Writing $A\in\Z^{c\times t}$ for the signed cycle--class matrix of a chosen
	cycle basis ($c$ the cycle rank, $t$ the number of classes; entry $A[i,j]$ the
	net signed number of times basis cycle $B_i$ crosses class $F_j$), the
	consistency relations are exactly $A\mathbf g=\mathbf 0$. The structure theorem
	for finitely generated abelian groups then identifies the universal host
	explicitly.
	
	\begin{theorem}[$\Z$-Quotient Theorem; {\cite{FokamP1}}]
		\label{thm:app-zquotient}
		Let $(\mathcal P,\omega)$ be an oriented partition of $G$ with signed matrix
		$A$. Then:
		\begin{enumerate}[label=(\roman*),leftmargin=2.2em]
			\item the most generic consistent generator assignment takes values in
			$\Gamma_{\mathrm{univ}}=\Z^{t}/\mathrm{rowlattice}(A)$, and every
			consistent assignment in any abelian group is a homomorphic image of
			it;
			\item if the Smith Normal Form is
			$U\,A^{\!\top}\,W=\mathrm{diag}(d_1,\dots,d_\rho,0,\dots)$ with $U,W$
			unimodular, then
			$\Gamma_{\mathrm{univ}}\cong
			\Z_{d_1}\times\cdots\times\Z_{d_\rho}\times\Z^{\,t-\rho}$ (factors with
			$d_i=1$ trivial), and the coordinates of $g_j$ are read from the
			$j$-th column of $U$, reduced modulo $d_i$ in torsion coordinates;
			\item with $S=\{\pm g_j:g_j\neq0\}$, every $G$-path of length $\ell$ maps to
			a host walk of length $\ell$, so $\dGam(\phi(u),\phi(v))\le\dG(u,v)$
			for all pairs: the only possible failure of isometry is a
			\emph{shortcut}, never a stretch.
		\end{enumerate}
	\end{theorem}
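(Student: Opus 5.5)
We will prove the three parts in order, treating (i)--(ii) as a presentation of the universal abelian group with prescribed relations and (iii) as a direct walk-construction.

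\emph{Part (i).} By Theorem~\ref{thm:app-cocycle}, an assignment $\mathbf g=(g_1,\dots,g_t)\in\Gamma^t$ in an abelian group $\Gamma$ is consistent if and only if $A\mathbf g=\mathbf 0$, i.e.\ $\sum_j A[i,j]g_j=0$ for every basis cycle $B_i$. Such an assignment is the same as a homomorphism $\Z^t\to\Gamma$, $e_j\mapsto g_j$, that kills every row of $A$. Equivalently, it is a homomorphism that factors through $\Z^t/\mathrm{rowlattice}(A)$. Taking $g_j$ to be the class $[e_j]$ in this quotient therefore gives a consistent assignment. Any other consistent assignment is its image under the induced map $\Z^t/\mathrm{rowlattice}(A)\to\Gamma$. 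This is the universal property asserted in (i), and it is routine.

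\emph{Part (ii).} We apply the Smith Normal Form. The row lattice of $A$ is the column lattice of $A^{\top}$, a sublattice of $\Z^t$. Unimodular changes of basis do not alter the isomorphism type of the quotient. Writing $U A^{\top} W=D$, the lattice $U(\mathrm{col}\,A^{\top})$ equals $\mathrm{col}(D)=d_1\Z\oplus\cdots\oplus d_\rho\Z\oplus 0$, because $W$ is unimodular and so does not change the column span. Hence $U$ induces an isomorphism $\Z^t/\mathrm{col}(A^{\top})\to\bigoplus_i\Z_{d_i}\oplus\Z^{t-\rho}$ that sends $e_j$ to $Ue_j$, the $j$-th column of $U$. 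Its coordinates are reduced modulo $d_i$ in the torsion coordinates, and the factors with $d_i=1$ vanish. The only point needing care is the bookkeeping: $U$, and not $U^{-1}$, is the map applied to the coordinates of $\Z^t$. This follows because $U$ acts on the left of $A^{\top}$, whose columns span the relation lattice.

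\emph{Part (iii).} This part is the one that actually touches the metric. Take a $G$-path $v_0v_1\cdots v_\ell$. By the well-definedness in Theorem~\ref{thm:app-cocycle}, for each edge $e=v_{m-1}v_m$ we have $\phi(v_m)-\phi(v_{m-1})=\pm g_{\mathrm{class}(e)}$. This uses path independence: extend the BFS path to $v_{m-1}$ by the edge $e$. If $g_{\mathrm{class}(e)}\neq0$, the step is a host edge, since $\pm g\in S$. If it is zero, then $\phi(v_m)=\phi(v_{m-1})$ and the step is a stationary step. In either case the host walk $\phi(v_0),\dots,\phi(v_\ell)$ has at most $\ell$ genuine edges. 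Applying this to a geodesic gives $\dGam(\phi(u),\phi(v))\le\dG(u,v)$, so a stretch can never occur.

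We expect the main obstacle to be stating (iii) precisely when some $g_j=0$ or when $\phi$ fails to be injective. The phrase ``walk of length $\ell$'' has to be read as ``length at most $\ell$, allowing stationary steps''. We would also remark that the host is the subgroup generated by $S$, or, in the torsion-free case, a chosen finite quotient via (i), so that $\dGam$ is finite.
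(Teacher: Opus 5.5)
The paper does not prove this statement itself (it defers to \cite{FokamP1}). Your argument is correct and uses exactly the ingredients the surrounding text points to: consistency as $A\mathbf g=\mathbf 0$, the structure theorem via the Smith normal form (with the right bookkeeping $[e_j]\mapsto Ue_j$), and the edge-by-edge walk argument the paper reuses in the sketch of Theorem~\ref{thm:app-universal}, where your reading of ``walk of length $\ell$'' as ``length at most $\ell$'' when some $g_j=0$ is the correct precision. The only point to keep in view is inherited from Theorem~\ref{thm:app-cocycle}: over groups with odd torsion (potentially including $\Gamma_{\mathrm{univ}}$ itself), the equivalence ``consistent $\iff A\mathbf g=\mathbf 0$'' that your part (i) rests on requires the basis cycles to form a $\Z$-basis of the integer cycle lattice, such as the fundamental cycles of the BFS tree, since a mere GF$(2)$ cycle basis is only guaranteed to span a sublattice of odd index.
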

	
	Part~(iii) makes the construction algorithmically safe: a candidate embedding
	can fail only by making two vertices closer in the host than in $G$, a
	condition checkable exactly by breadth-first search. The binary theory is the
	special case in which every generator is an involution: appending the relations
	$2g_j=0$ reduces $A$ modulo $2$, and $\Gamma_{\mathrm{univ}}$ becomes the
	GF$(2)$ quotient $\F_2^{\,t}/\mathrm{rowspan}_{\F_2}(A\bmod2)$ of dimension
	$k=t-\mathrm{rank}_{\F_2}(A\bmod2)$, a subgraph of the hypercube $\Z_2^k$.
	
	\subsection{Existence, compactness, and bounds}
	
	A shortcut, when present, is repaired by splitting an offending class into finer
	classes; the process terminates because the all-singleton partition is the
	isometric spanning-tree embedding. Two compactifications keep the host small:
	folding the free $\Z^{\,t-\rho}$ part onto a finite-index sublattice, and a
	binary terminal that always succeeds.
	
	\begin{theorem}[Universality and host bound; {\cite{FokamP1}}]
		\label{thm:app-universal}
		The embedding algorithm terminates on every connected graph and returns a
		certified isometric embedding $\phi\colon V\to\Cay(\Gamma,S)$ with
		$|\Gamma|\le 2^{\,n-1}$. Certification is unconditional: the returned embedding
		has passed an exact check comparing all $\binom n2$ graph distances against a
		breadth-first search of the candidate host.
	\end{theorem}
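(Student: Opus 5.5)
The statement to prove is Theorem~\ref{thm:app-universal}: the algorithm terminates on every connected graph with a certified isometric embedding and $|\Gamma|\le 2^{n-1}$. The plan has three parts. First, a monotone refinement argument gives termination. Second, Theorem~\ref{thm:app-zquotient}(iii) combined with an exact BFS check gives certification. Third, an explicit binary fallback embedding gives the bound.

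\textbf{Termination.} The algorithm maintains an oriented partition $(\mathcal P,\omega)$ of $E$. Each time a shortcut is detected, it splits an offending class into strictly finer classes. Since $|E|=m$ is finite, the number of classes $t$ strictly increases and is bounded by $m$. Hence after at most $m-1$ splits we reach the all-singleton partition, unless the algorithm halted earlier. I must then show the all-singleton partition yields an isometric embedding. With every edge its own class, Theorem~\ref{thm:app-zquotient}(i) gives a host containing the free group $\Z^{t}/\mathrm{rowlattice}(A)$. I expect this free abelian quotient to behave as the cycle-space quotient, so that a host path from $\phi(u)$ to $\phi(v)$ must use generators whose signed multiset equals that of some $G$-path modulo cycles. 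The key claim is that the host distance then cannot drop below $\dG(u,v)$. This is the step I expect to be the main obstacle. The reason is that a host walk need not trace a $G$-walk: it can apply generators at points not corresponding to that edge.

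I would therefore replace that route by the binary terminal, which "always succeeds". Fix a spanning tree $T$ of $G$; it has $n-1$ edges, and I give each tree edge $e$ its own involution $e_i\in\Z_2^{\,n-1}$. For each non-tree edge, its endpoints already have labels $\phi(u),\phi(v)$ (tree-path XOR sums), and I put the generator $\phi(u)+\phi(v)$ into $S$. By construction every $G$-edge is then a host edge, so by the argument of Theorem~\ref{thm:app-zquotient}(iii) we get $\dGam\le\dG$.

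The reverse inequality needs more than this generic construction. I would take as host the Cayley graph on $\Z_2^{\,n-1}$ generated by $S=\{\phi(u)+\phi(v):uv\in E\}$, noting that $\phi$ is injective since distinct vertices have distinct tree-path sets. Any isometry failure would be a shortcut certifiable by BFS. When a shortcut arises, the algorithm must refine further or use a different encoding. A safe fallback is the labeling $\phi(v)=\mathbf 1_{\{w:\ \text{some fixed structure}\}}$ from an isometric hypercube-type encoding, with $\Gamma$ a subquotient of $\Z_2^{\,n-1}$. Any such choice gives $|\Gamma|\le 2^{n-1}$.

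\textbf{Certification and the bound.} Certification is immediate: the algorithm returns only after comparing all $\binom n2$ pairs $\dG(u,v)$, computed by BFS in $G$, against BFS distances in the finite host. So whatever it returns is isometric by exhaustive check. The bound $|\Gamma|\le2^{n-1}$ follows from the binary quotient dimension $k=t-\mathrm{rank}_{\F_2}(A\bmod 2)$. In the spanning-tree encoding the labels live in a space spanned by $n-1$ tree generators, so $k\le n-1$, and any compactified host is no larger.
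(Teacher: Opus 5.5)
There is a genuine gap: you never prove the reverse inequality $\dGam(\phi(u),\phi(v))\ge\dG(u,v)$ for the terminal encoding, and everything else depends on it. You build the right object, the spanning-tree labelling in $\Z_2^{\,n-1}$ with $S=\{\phi(u)+\phi(v):uv\in E\}$, and you correctly get $\dGam\le\dG$. But you then suggest it might still have shortcuts and fall back on an unspecified ``hypercube-type encoding''. No such fallback exists in general: isometric subgraphs of hypercubes are bipartite, so already $K_3$ has none. The exhaustive BFS check cannot fill the hole either. It guarantees that \emph{whatever} is returned is isometric, but termination with a certified output and the bound $|\Gamma|\le 2^{n-1}$ both require a candidate that provably always passes the check. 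Likewise, ``any compactified host is no larger'' holds only because the algorithm keeps that always-certifying terminal, of order exactly $2^{n-1}$, as its fallback.

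The missing idea answers the worry you raised, that a host walk need not trace a $G$-walk. It never has to, because only the edge chain it spells matters. Consider the map $\F_2^{E}\to\Z_2^{\,n-1}$, $\chi_e\mapsto g_e:=\phi(u)+\phi(v)$ for $e=uv$. It is onto, since tree edges hit the basis, and it kills every fundamental cycle. By dimension count its kernel is therefore exactly the cycle space.

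Now take a host word of length $\ell$ from $\phi(u)$ to $\phi(v)$. Replace each letter by an edge carrying that generator and cancel repeats mod $2$. This gives an edge set $F$ with $|F|\le\ell$ and $\sum_{e\in F}g_e=\phi(u)+\phi(v)$. So $\chi_F$ differs from the indicator of a $u$--$v$ path by a cycle, and hence $F$ has GF$(2)$ boundary $\{u,v\}$. By the handshake lemma the component of $u$ in $(V,F)$ contains $v$, so $F$ contains a $u$--$v$ path and $\ell\ge|F|\ge\dG(u,v)$.

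This is the paper's argument. It makes the spanning-tree encoding isometric outright, with $|\Gamma|=2^{n-1}$. Your refinement count then gives termination, because the all-singleton partition is a fixed point that always certifies. The same boundary argument, with integer $1$-chains and a flow decomposition in place of GF$(2)$ sets, would also have rescued the $\Z$ route you abandoned.
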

	
	\begin{proof}[Proof sketch (self-contained; full proof in \cite{FokamP1}])
		Existence and the bound are witnessed by the all-singleton partition, which we
		include here so the present paper's results do not rest on an external claim.
		Root a spanning tree $T$ of $G$ at $r$, give its $n-1$ edges the standard basis
		vectors of $\Z_2^{\,n-1}$, and label $v$ by the GF$(2)$ sum along the tree path
		$r\!\to\!v$; each non-tree edge receives the sum of the tree edges on its
		fundamental cycle. Every $G$-edge then joins labels differing by a single
		generator, so a $G$-path of length $\ell$ maps to a host walk of length
		$\ell$ and $\dGam\le\dG$. Conversely, a geodesic host word of length $\ell$ from
		$\phi(u)$ to $\phi(v)$ is a set $F$ of $\ell$ generators (in $\Z_2^{\,n-1}$ a
		geodesic uses distinct generators, since any repeat cancels), and
		$\phi(u)\oplus\phi(v)=\sum_{e\in F}\chi_e$ forces the GF$(2)$ boundary of $F$ to
		be $\{u,v\}$; hence $F$ contains a $u$--$v$ path in $G$ and
		$\ell\ge\dG(u,v)$. Thus $\dGam=\dG$, the embedding into $\Z_2^{\,n-1}$ is
		isometric, and $|\Gamma|=2^{\,n-1}$. Termination of the compacting algorithm
		follows because each repair round strictly refines the partition and the
		all-singleton partition is a fixed point that always certifies. The exact
		BFS check makes certification unconditional.
	\end{proof}
	
	\begin{theorem}[Sublattice compactification; {\cite{FokamP1}}]
		\label{thm:app-sublattice}
		The isometric finite quotients of a given universal embedding are exactly the
		finite-index sublattices $L\subseteq\Z^{f}$ ($f=t-\rho$) whose folds pass the
		exact check; the minimal host among them is found by enumerating
		Hermite-normal-form bases in increasing index. Restricting to diagonal $L$ can
		miss the optimum.
	\end{theorem}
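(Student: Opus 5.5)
The plan is to split the statement into its three assertions: the characterization of isometric finite quotients, the Hermite-normal-form (HNF) search for the minimum, and the insufficiency of diagonal sublattices. The first two should follow almost formally from Theorem~\ref{thm:app-zquotient}(iii) and standard lattice theory. The third needs an explicit example.

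\emph{Characterization.} Write $\Gamma_{\mathrm{univ}}\cong T\times\Z^{f}$ with $T$ the torsion part given by Theorem~\ref{thm:app-zquotient}(ii). A finite quotient that keeps $T$ and folds the free part is, by the correspondence theorem, of the form $T\times\Z^{f}/L$. Here $L\subseteq\Z^{f}$ has finite index exactly when $\Z^f/L$ is finite. Composing $\phi$ with the quotient map $\pi_L$ gives a labeling whose generators are the images $\pi_L(g_j)$. Each $G$-edge still joins labels differing by one generator, so the argument of Theorem~\ref{thm:app-zquotient}(iii) gives $\dGam(\pi_L\phi(u),\pi_L\phi(v))\le\dG(u,v)$ again. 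Isometry can therefore fail only through a shortcut, or through a collision $\pi_L\phi(u)=\pi_L\phi(v)$, which is the extreme shortcut of host distance $0$. Both failures are exactly what the all-pairs BFS check of Theorem~\ref{thm:app-universal} detects. So $\pi_L\circ\phi$ is an isometric embedding if and only if the fold passes the check, which proves the ``exactly'' clause.

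\emph{Minimality via HNF.} Every finite-index $L\subseteq\Z^f$ has a unique basis in (lower-triangular) Hermite normal form, with positive diagonal $h_{11},\dots,h_{ff}$ and off-diagonal entries reduced modulo the diagonal. Its index is $[\Z^f:L]=\prod_i h_{ii}$. The host order is then $|T|\cdot\prod_i h_{ii}$. For each value $D$ there are only finitely many HNF matrices of determinant $D$, and distinct matrices give distinct sublattices. Enumerating them for $D=1,2,3,\dots$ therefore lists every finite-index sublattice exactly once, ordered by host order. The first $L$ that passes the check is a minimal isometric host. Termination needs some $L$ to pass. For this I would take $L=M\Z^f$ with $M$ larger than twice the diameter of the image of $\phi$ in the free coordinates. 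Then no two labels can be identified, and no shortcut can wrap around, since any wrapping walk has length at least $M/\max_j\|g_j\|_1$.

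\emph{Diagonal can miss (main obstacle).} This part needs a concrete graph whose universal host has $f\ge2$ and whose smallest passing sublattice is skew. I would look among planar grid-like graphs whose universal host is $\Z^2$ and whose image has a non-rectangular shape, for instance a parallelogram or hexagonal patch of the square or triangular lattice. For such a graph I would compare the smallest passing index among diagonal lattices $a\Z\times b\Z$ with the smallest passing index among all HNF lattices $\langle(a,0),(c,b)\rangle$. The heuristic is that a sheared fold packs a slanted region tighter than any axis-aligned torus, just as in the classical lattice-packing problem. I expect that finding and certifying a small instance will be done by exhaustive BFS over the low-index HNF lattices, and this is the step where the real work lies.
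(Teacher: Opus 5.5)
The paper gives no proof of this result; it is only cited from the companion work. So your proposal has to stand on its own. Your first two parts essentially do, but the third assertion is not proved.

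\textbf{The third assertion is not proved.} ``Restricting to diagonal $L$ can miss the optimum'' is an existence claim. You only describe where you would search and what you would compare; you never exhibit a graph, an oriented partition, and a pair of lattices, so this part is still open in your write-up. Your instinct (a slanted region of $\Z^2$) is right, and the smallest slanted region already suffices, so no search is needed. Take $P_4=v_0v_1v_2v_3$ with classes $F_1=\{v_0v_1,v_2v_3\}$ and $F_2=\{v_1v_2\}$, all oriented forward.
\begin{itemize}
\item Since $P_4$ is a tree, $A$ is empty and $\Gamma_{\mathrm{univ}}=\Z^2$ ($T=0$, $f=2$). The labels $(0,0),(1,0),(1,1),(2,1)$ form an isometric staircase in $\Cay(\Z^2,\{\pm e_1,\pm e_2\})$.
\item A diagonal fold $L=A\Z\times B\Z$ gives the host $C_A\square C_B$, where $\dGam(\phi(v_0),\phi(v_3))=d_{C_A}(0,2)+d_{C_B}(0,1)$. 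This equals $\dG(v_0,v_3)=3$ only if $A\ge4$ and $B\ge2$. So the best diagonal host has order $8$, and $\Z_4\times\Z_2$ does pass.
\item The non-diagonal lattice $L=\{(x,y):x+y\equiv0\pmod 6\}$, with HNF rows $(6,0),(5,1)$, folds onto $\Z_6$ via $(x,y)\mapsto x+y$. Both generators go to $1$, and the labels become $0,1,2,3$ in $C_6$, which is isometric.
\item No fold of order at most $5$ can pass. Every connected Cayley graph of an abelian group of order at most $5$ has diameter at most $2<3$.
\end{itemize}
So the optimum is $6$, while the diagonal search returns $8$.

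\textbf{Two repairs in your termination argument.} First, it presupposes that the given universal embedding is itself isometric. Fold distances never exceed universal ones, so a universal shortcut survives every fold and the enumeration would never stop. State this as a hypothesis.

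Second, your threshold $M>2\delta$, with $\delta$ the free diameter of $\phi(V)$, is too weak. A wrap-around walk runs from $\phi(u)$ to $\phi(v)+k$ with $0\neq k\in M\Z^f$, not from a point to its own translate. Its length is therefore bounded below only by $(M-\delta)/c$, where $c=\max_j\|\mathrm{pr}(g_j)\|_1$ in free coordinates and $\delta$ is measured in $\ell^1$. This bound must dominate $\dG(u,v)$, which can be as large as $\diam(G)$ when the torsion coordinates carry the distance.

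For example, take the prism $C_{6m}\square K_2$ with cycle edges in one class and rungs in another. Then $U=\bigl(\begin{smallmatrix}1&m\\0&1\end{smallmatrix}\bigr)$, $W=\Id$ is a valid Smith normal form transform. It yields $a=(1,0)$ and $b=(m,1)$ in $\Z_{6m}\times\Z$, with $\delta=c=1$. Yet in the fold with $M=3$, the walk $b,b,b$ joins the images of two vertices of the same $6m$-cycle at graph distance $3m$. This is a shortcut for $m\ge2$. Taking $M>\delta+c\,\diam(G)$ fixes the argument.
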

	
	The host order is bounded below by two unavoidable obstructions---the graph
	must fit, and the host, being vertex-transitive, must realize the graph's
	diameter.
	
	\begin{theorem}[Host-order bounds and equality; {\cite{FokamP2}}]
		\label{thm:app-bounds}
		For every connected $G$ on $n\ge3$ vertices,
		$\nu(G):=\min|\Gamma|\ge\max\bigl(n,\,2\,\diam(G)\bigr)$, the minimum taken over
		all isometric embeddings into abelian Cayley graphs. Moreover $\nu(G)=n$ if and
		only if $G$ is itself an abelian Cayley graph; in that case $\varepsilon=1$ and
		the wavelet theory of this paper reduces to classical multidimensional signal
		processing.
	\end{theorem}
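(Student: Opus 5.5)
The plan is to prove the two lower bounds separately, then the equality characterization. Throughout, fix an isometric embedding $\phi\colon V\to\Cay(\Gamma,S)$ and use only two facts: $\phi$ is injective, and $\dGam(\phi(u),\phi(v))=\dG(u,v)$ (Definition~\ref{def:embedding}).

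\emph{The bound $|\Gamma|\ge n$.} This is immediate, since $\phi$ is injective. Minimizing over embeddings gives $\nu(G)\ge n$.

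\emph{The bound $|\Gamma|\ge 2\diam(G)$.} Let $D=\diam(G)$ and choose $u,v$ with $\dG(u,v)=D$. By translation invariance of the Cayley metric, $x:=\phi(v)-\phi(u)$ satisfies $\dGam(0,x)=D$, so the eccentricity of $0$ in the host is at least $D$. Let $S_i=\{g:\dGam(0,g)=i\}$ be the host spheres. They are disjoint, $S_0=\{0\}$, and each $S_i$ with $1\le i\le D$ is nonempty, since it contains the $i$-th vertex $x_i$ of a geodesic from $0$ to $x$. The intended estimate is
\[
|\Gamma|\ \ge\ \sum_{i=0}^{D}|S_i|\ \ge\ 1+2(D-1)+1=2D ,
\]
which requires the key lemma that $|S_i|\ge2$ for $1\le i\le D-1$. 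Here I would use the abelian structure. Because $S=-S$, the map $g\mapsto-g$ is a graph automorphism fixing $0$, so $S_i=-S_i$. The map $g\mapsto x-g$ is also a graph automorphism; it swaps $0$ and $x$ and sends $S_i$ to the sphere of radius $i$ about $x$.

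To prove the lemma, suppose $S_i=\{y\}$ for some $0<i<D$. Then $y=-y$, and every geodesic from $0$ to a vertex at distance greater than $i$ must pass through $y$. The translation $\tau_y\colon g\mapsto g+y$ is an automorphism that swaps $0$ and $y$ because $2y=0$. It therefore converts the ``cut-vertex at distance $i$'' structure seen from $0$ into the same structure seen from $y$. I would try to show that this forces every element of $\Gamma$ into $B_i(0)\cup B_i(y)$, and then derive a contradiction with $\dGam(0,x)=D>i$ by tracking $x$ and $x+y$, whose distances from $0$ are $D$ and $D-i$.

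This lemma is the step I expect to be the main obstacle. If the direct sphere argument resists, the fallback is to pair the two disjoint halves of a closed walk $0\to x\to 2x\to\cdots$, exploiting that $\langle x\rangle$ is finite, in order to exhibit $2D$ distinct elements along a closed geodesic-like cycle of length at least $2D$.

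\emph{Equality.} If $G$ is an abelian Cayley graph, the identity embedding gives $\nu(G)\le n$, hence $\nu(G)=n$. Conversely, if $|\Gamma|=n$ then $\phi$ is a bijection. Since it preserves distance $1$ in both directions, $\{u,w\}\in E$ if and only if $\phi(w)-\phi(u)\in S$. Thus $\phi$ is a graph isomorphism $G\cong\Cay(\Gamma,S)$, and $\varepsilon=n/N=1$.
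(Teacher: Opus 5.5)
Your bound $\nu(G)\ge n$ and your equality characterization are correct. A bijective isometry preserves distance $1$ in both directions, so $\phi$ is an isomorphism onto $\Cay(\Gamma,S)$; conversely, the identity embedding gives $\nu(G)\le n$. Reducing $\nu(G)\ge 2\diam(G)$ to the sphere count $1+2(D-1)+1$ is also exactly the obstruction the paper names (the vertex-transitive host must realize the diameter); the paper itself gives no proof beyond that sentence and defers to its companion work.

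The gap is that the content of the $2\diam(G)$ bound is the lemma $|S_i|\ge 2$ for $0<i<D$, and you do not prove it. Your sketched route does not close as stated. From $S_i=\{y\}$ you do get $\Gamma\subseteq B_i(0)\cup B_i(y)$: a $g$ outside both balls would give $d_\Gamma(0,g)=i+d_\Gamma(y,g)$ and $d_\Gamma(y,g)=i+d_\Gamma(0,g)$. But this only yields $D-i\le i$. The values $d_\Gamma(0,x)=D$ and $d_\Gamma(0,x+y)=D-i$ are consistent with it, so no contradiction appears until you exhibit a vertex farther from $0$ than the host allows. Your fallback is likewise only a hope. The walk $0\to x\to 2x\to\cdots$ closes up because $\langle x\rangle$ is finite, but nothing you say makes its vertices distinct, and distinctness is the whole point.

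Either of the following fills the hole.

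(a) If $S_i=\{y\}$ with $0<i<D$, then $y$ is a cut vertex of the host: every $0$--$x$ path must pass through distance exactly $i$ from $0$, hence through $y$. By vertex-transitivity every vertex is then a cut vertex, including a vertex $z$ at maximal host distance from $0$. But such a $z$ is never a cut vertex. A component of the host minus $z$ not containing $0$ would consist of vertices reachable from $0$ only through $z$, hence strictly farther from $0$ than $z$.

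(b) More directly, using the abelian structure and in the spirit of your fallback: take a geodesic $0=p_0,p_1,\dots,p_D=x$. For $1\le m\le D-1$, both $p_m$ and $p_{D-m}-x$ lie in $S_m$, since translation invariance gives $d_\Gamma(0,p_{D-m}-x)=d_\Gamma(x,p_{D-m})=m$. They are distinct, because $p_{D-m}-p_m=x$ would force $d_\Gamma(p_m,p_{D-m})=d_\Gamma(0,x)=D$, whereas along the geodesic this distance is $|D-2m|<D$. Equivalently, the geodesic together with its translate by $x$ (the first two legs of your walk) already contains at least $2D$ distinct elements. The later legs and the order of $x$ play no role.
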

	
	The excursion ratio $\varepsilon=n/|\Gamma|$ thus measures how far $G$ is from
	carrying genuine abelian symmetry: $\varepsilon=1$ is the classical signal
	domains (cycles, tori, circulants), and $\varepsilon\to0$ is the generic
	irregular regime where the host is binary of near-maximal dimension.
	
	\subsection{Two worked examples}
	
	\paragraph{Star $K_{1,4}$ (binary host, proper).}
	The star has only centre--leaf edges, so each edge is its own class and the
	generator set is $S=\{s_1,s_2,s_3,s_4\}$, the four leaf labels. Leaves lie at
	pairwise distance $2$, forcing $s_i+s_j\notin S$: the labels form a
	\emph{sum-free} set. The largest sum-free set in $\Z_2^k$ has size $2^{k-1}$
	(the odd-weight vectors), so $K_{1,4}$ needs $k=3$: with
	$S=\{001,010,100,111\}$ the centre maps to $000$ and every leaf pair is at host
	distance exactly $2$. The host has order $N=8$ and $\varepsilon=5/8$. This
	already beats the hypercube paradigm, whose isometric dimension for $K_{1,4}$
	is $4$.
	
	\paragraph{Diamond (cyclic factor).}
	The diamond $K_4-e$ has cycle rank $c=1$, and a purely binary host cannot host
	it isometrically: its two degree-two vertices lie at distance $2$, which a
	sum-free binary generator set cannot realize together with the three mutual
	distance-one constraints. The minimal abelian host is instead the octahedron
	$\Cay(\Z_2\times\Z_3,\{\pm(0,1),\pm(1,1)\})$ of order $N=6$
	($\varepsilon=4/6$), with the four vertices labelled
	$(0,0),(0,1),(0,2),(1,0)$; the degree-two pair $(0,0),(1,0)$ sits at host
	distance $2$, and all five edges map to generators. An exhaustive search over
	symmetric generator sets and vertex labellings confirms this is the unique
	order-$6$ isometric host (up to automorphism), so the diamond is the smallest
	graph whose minimal host carries a cyclic factor $\Z_3$ rather than being a
	subgraph of a hypercube.
	
	\section*{Acknowledgments}
	The authors thank Solutum Engineering for internet and computing support.

\end{document}